\newtheorem{thm}{Theorem}
\theoremstyle{remark}
\newcommand{\assign}{\leftarrow}
\newcommand{\ZZ}{\mathbf{Z}}
\newcommand{\QQ}{\mathbf{Q}}
\newcommand{\divides}{\mathbin{|}}
\newcommand{\floor}[1]{\lfloor {#1} \rfloor}
\newcommand{\ceil}[1]{\lceil {#1} \rceil}
\DeclareMathOperator{\TFT}{\text{\sc TFT}}
\DeclareMathOperator{\ITFT}{\text{\sc ITFT}}
\DeclareMathOperator{\cfTFT}{\text{\sc CacheFriendlyTFT}}
\DeclareMathOperator{\cfITFT}{\text{\sc CacheFriendlyITFT}}
\newcommand{\cna}{{\scriptstyle a}}
\newcommand{\cnb}{{\scriptstyle b}}
\newcommand{\cnha}{{\scriptstyle {\hat a}}}
\newcommand{\cnhao}{{\scriptstyle (\hspace{-1pt}{\hat a}\hspace{-1pt})}}
\newcommand{\cnz}{{\scriptstyle \cdot}}
\newcommand{\cns}{{\scriptscriptstyle ?}}
\newcommand{\cba}{{\scriptstyle \boldsymbol a}}
\newcommand{\cbb}{{\scriptstyle \boldsymbol b}}
\newcommand{\cbbo}{{\scriptstyle (\hspace{-1pt}\boldsymbol b\hspace{-1pt})}}
\newcommand{\cbha}{{\scriptstyle \boldsymbol{\hat a}}}
\newcommand{\cbz}{{\scriptstyle \boldsymbol \cdot}}
\newcommand{\grid}[1]{\save%
(4, -1.3)*{\text{(#1)}};
(0,0); (8,0) **\dir{-};
(0,1); (8,1) **\dir{-};
(0,2); (8,2) **\dir{-};
(0,3); (8,3) **\dir{-};
(0,4); (8,4) **\dir{-};
(0,5); (8,5) **\dir{-};
(0,6); (8,6) **\dir{-};
(0,7); (8,7) **\dir{-};
(0,8); (8,8) **\dir{-};
(0,0); (0,8) **\dir{-};
(1,0); (1,8) **\dir{-};
(2,0); (2,8) **\dir{-};
(3,0); (3,8) **\dir{-};
(4,0); (4,8) **\dir{-};
(5,0); (5,8) **\dir{-};
(6,0); (6,8) **\dir{-};
(7,0); (7,8) **\dir{-};
(8,0); (8,8) **\dir{-};\restore
}
\newcommand{\row}[9]{\save%
(0.5, #1.58)*{\vphantom{bq} #2};
(1.5, #1.58)*{\vphantom{bq} #3};
(2.5, #1.58)*{\vphantom{bq} #4};
(3.5, #1.58)*{\vphantom{bq} #5};
(4.5, #1.58)*{\vphantom{bq} #6};
(5.5, #1.58)*{\vphantom{bq} #7};
(6.5, #1.58)*{\vphantom{bq} #8};
(7.5, #1.58)*{\vphantom{bq} #9};
\restore
}
\begin{document}

\title{A cache-friendly truncated FFT}
\author{David Harvey}
\begin{abstract}
We describe a cache-friendly version of van der Hoeven's truncated FFT and inverse truncated FFT, focusing on the case of `large' coefficients, such as those arising in the Sch\"onhage--Strassen algorithm for multiplication in $\ZZ[x]$. We describe two implementations and examine their performance.
\end{abstract}

\maketitle

\section{Introduction}
\label{sec:intro}

In typical implementations of the FFT method for dense univariate polynomial multiplication, the input polynomials are zero-padded up to an appropriate power-of-two length, causing a jump in the running time when the lengths cross a power-of-two boundary. Van der Hoeven recently described a multiplication algorithm that greatly reduces the size of these jumps, by introducing a novel TFT (truncated FFT) and ITFT (inverse truncated FFT), achieving relatively smooth performance without sacrificing the simplicity of a power-of-two transform length \cite{vdh1, vdh2}.

However, the transforms that he describes suffer from suboptimal locality. The transforms follow the divide-and-conquer FFT paradigm, recursively splitting the problem into two half-sized transforms. If the transform length is $2^\ell$, and only $2^k$ coefficients fit into a given level of cache, then only the deepest $k$ layers of the transform take advantage of that cache; the remaining $\ell - k$ layers do not.

In this paper we address this difficulty, achieving superior locality by reordering the sequence of butterfly operations in van der Hoeven's transforms. Our strategy is similar to Bailey's algorithm \cite{bailey}. Bailey rearranges the data into a $2^{\ell_1} \times 2^{\ell_2}$ matrix, where $\ell_1 + \ell_2 = \ell$, and then rewrites the transform as $2^{\ell_2}$ column transforms of length $2^{\ell_1}$ followed by $2^{\ell_1}$ row transforms of length $2^{\ell_2}$. The divide-and-conquer algorithm may be regarded as the special case where $\ell_1 = 1$ and $\ell_2 = \ell - 1$. However, when $\ell_i \approx \ell/2$, the working set for each row and column is only about $2^{\ell/2}$ coefficients, greatly improving the algorithm's locality. This method can of course be applied recursively, until the working set for each subtransform fits into the lowest level of cache, making efficient use of the entire memory hierarchy.

It is straightforward to adapt this idea to the TFT, obtaining a decomposition of the TFT into TFTs of half the depth (\S\ref{sec:TFT}). The corresponding decomposition of the ITFT is more involved; it becomes necessary to alternate between ITFTs on the rows and columns in a slightly complicated way (\S\ref{sec:ITFT}).

In \S\ref{sec:applications} we discuss the performance of two implementations. The first is an implementation of the Sch\"onhage--Strassen algorithm \cite{schonhage-strassen} for multiplication in $\ZZ[x]$. The second is an implementation of the Sch\"onhage--Nussbaumer convolution algorithm \cite{schonhage, nussbaumer} for the case of $(\ZZ/m\ZZ)[x]$ where $m$ is an odd word-sized modulus. In both cases the Fourier coefficients occupy relatively large blocks of memory. A natural question is whether the new algorithms are suitable for the more conventional case of `small' coefficients, such as double-precision real or complex coefficients. We offer some speculation in \S\ref{sec:small}, although we have not attempted an implementation.

\section{Notation and setup}

Let $R$ be a commutative ring in which $2$ is invertible. We assume that $R$ contains a principal $M$-th root of unity $\omega$, where $M = 2^m$ for some integer $m \geq 1$; this means that $\omega^M = 1$ and moreover that $\sum_{i=0}^{M-1} \omega^{ij} = 0$ for all $0 < j < M$. We have in mind examples like $R = \ZZ/(2^{M/2} + 1)\ZZ$ and $\omega = 2$, which appears in the Sch\"onhage--Strassen algorithm for multiplication in $\ZZ[x]$.

If $L \divides M$, we denote by $\omega_L$ the principal $L$-th root of unity $\omega^{M/L}$; we then have the compatibility relation $(\omega_{L'})^{L'/L} = \omega_L$ for any $L \divides L' \divides M$.

Now suppose that $L \divides M$, $L = 2^\ell$, and let $\zeta \in R^\times$. Let $(a_0, \ldots, a_{L-1}) \in R^L$. The (weighted) discrete Fourier transform (DFT) is defined by
\begin{equation}
\label{eq:dft}
 \hat a_j = \zeta^{j'} \sum_{i=0}^{L-1} \omega_L^{ij'} a_i,  \quad 0 \leq j < L,
\end{equation}
where $j'$ denotes the length-$\ell$ bit-reversal of $j$.

We define the \emph{truncated Fourier transform} (TFT) as follows. Let $1 \leq z \leq L$ and $1 \leq n \leq L$, and suppose that $a_z = \cdots = a_{L-1} = 0$. Then
 \[ \TFT(L, \zeta, z, n; (a_0, \ldots, a_{z-1})) := (\hat a_0, \ldots, \hat a_{n-1}). \]
In other words, the TFT computes a prescribed initial segment of the transform, assuming that some prescribed final segment of the untransformed data is zero (see Figure \ref{fig:TFT}).

\begin{figure}[ht]
\[ \begin{xy}
(0,0);<10mm,0mm>:<0mm,5mm>::
%
(0,4); (12,4) **\dir{-};
(0,5); (12,5) **\dir{-};
(0,4); (0,5) **\dir{-};
(1,4); (1,5) **\dir{-};
(2,4); (2,5) **\dir{-};
(3,4); (3,5) **\dir{-};
(4,4); (4,5) **\dir{-};
(5,4); (5,5) **\dir{-};
(6,4); (6,5) **\dir{-};
(7,4); (7,5) **\dir{-};
(8,4); (8,5) **\dir{-};
(9,4); (9,5) **\dir{-};
(10,4); (10,5) **\dir{-};
(11,4); (11,5) **\dir{-};
(12,4); (12,5) **\dir{-};
%
(0.5, 4.5) *{\vphantom{bq} a_0};
(1.5, 4.5) *{\vphantom{bq} a_1};
(2.5, 4.5) *{\cdots};
(3.5, 4.5) *{\cdots};
(4.5, 4.5) *{\cdots};
(5.5, 4.5) *{\vphantom{bq} a_{z-1}};
(6.5, 4.5) *{\vphantom{bq} 0};
(7.5, 4.5) *{\cdots};
(8.5, 4.5) *{\cdots};
(9.5, 4.5) *{\cdots};
(10.5, 4.5) *{\cdots};
(11.5, 4.5) *{\vphantom{bq} 0};
%
(0,0); (12,0) **\dir{-};
(0,1); (12,1) **\dir{-};
(0,0); (0,1) **\dir{-};
(1,0); (1,1) **\dir{-};
(2,0); (2,1) **\dir{-};
(3,0); (3,1) **\dir{-};
(4,0); (4,1) **\dir{-};
(5,0); (5,1) **\dir{-};
(6,0); (6,1) **\dir{-};
(7,0); (7,1) **\dir{-};
(8,0); (8,1) **\dir{-};
(9,0); (9,1) **\dir{-};
(10,0); (10,1) **\dir{-};
(11,0); (11,1) **\dir{-};
(12,0); (12,1) **\dir{-};
%
(0.5, 0.5) *{\vphantom{bq} \hat a_0};
(1.5, 0.5) *{\vphantom{bq} \hat a_1};
(2.5, 0.5) *{\cdots};
(3.5, 0.5) *{\cdots};
(4.5, 0.5) *{\cdots};
(5.5, 0.5) *{\cdots};
(6.5, 0.5) *{\cdots};
(7.5, 0.5) *{\vphantom{bq} \hat a_{n-1}};
(8.5, 0.5) *{\vphantom{bq} \hat a_n};
(9.5, 0.5) *{\cdots};
(10.5, 0.5) *{\cdots};
(11.5, 0.5) *{\vphantom{bq} \hat a_{L-1}};
%
(3.5, 2); (3.5, 3) **\dir{-}; (5.5, 3) **\dir{-}; (5.5, 2) **\dir{-}; (3.5, 2) **\dir{-};
(4.5, 2.5) *{\text{TFT}};
%
(0, 3.8); (0, 3.6) **\dir{-}; (6, 3.6) **\dir{-}; (6, 3.8) **\dir{-};
(4.5, 3.6); (4.5, 3) **\dir{-}*\dir{>};
%
(0, 1.2); (0, 1.4) **\dir{-}; (8, 1.4) **\dir{-}; (8, 1.2) **\dir{-};
(4.5, 2); (4.5, 1.4) **\dir{-}*\dir{>};
\end{xy} \]
\caption{The TFT.}
\label{fig:TFT}
\end{figure}

The definition of the \emph{inverse truncated Fourier transform} (ITFT) is more involved. Let $f \in \{0, 1\}$. Suppose that $1 \leq z \leq L$ and $1 \leq n + f \leq L$, and moreover that $z \geq n$. Suppose as before that $a_z = \cdots = a_{L-1} = 0$. Then
\begin{multline*}
 \ITFT(L, \zeta, z, n, f; (\hat a_0, \ldots, \hat a_{n-1}, La_n, \ldots, La_{z-1})) \\
  :=
  \begin{cases}
  (La_0, \ldots, La_{n-1}) & f = 0, \\
  (La_0, \ldots, La_{n-1}, \hat a_n) & f = 1.
  \end{cases}
\end{multline*}
In other words, the ITFT takes as input an initial segment of the transformed data together with the \emph{complementary} final segment of the untransformed data (some components of which are known to be zero), and returns the initial segment of the untransformed data, and optionally (if $f = 1$) the next transformed coordinate (see Figure \ref{fig:ITFT}). When $z = n = L$, $f = 0$ and $\zeta = 1$, the TFT and ITFT reduce to the usual DFT and inverse DFT, with inputs in normal order and outputs in bit-reversed order.
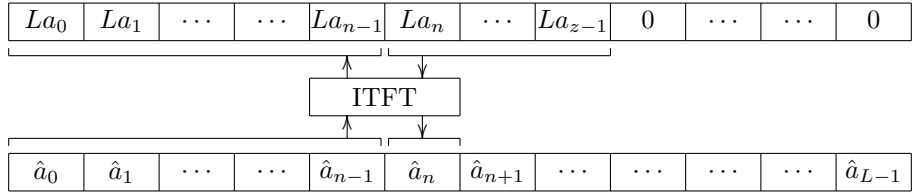
\begin{figure}[ht]
\[ \begin{xy}
(0,0);<10mm,0mm>:<0mm,5mm>::
%
(0,4); (12,4) **\dir{-};
(0,5); (12,5) **\dir{-};
(0,4); (0,5) **\dir{-};
(1,4); (1,5) **\dir{-};
(2,4); (2,5) **\dir{-};
(3,4); (3,5) **\dir{-};
(4,4); (4,5) **\dir{-};
(5,4); (5,5) **\dir{-};
(6,4); (6,5) **\dir{-};
(7,4); (7,5) **\dir{-};
(8,4); (8,5) **\dir{-};
(9,4); (9,5) **\dir{-};
(10,4); (10,5) **\dir{-};
(11,4); (11,5) **\dir{-};
(12,4); (12,5) **\dir{-};
%
(0.5, 4.5) *{\vphantom{bq} La_0};
(1.5, 4.5) *{\vphantom{bq} La_1};
(2.5, 4.5) *{\cdots};
(3.5, 4.5) *{\cdots};
(4.5, 4.5) *{\vphantom{bq} La_{n-1}};
(5.5, 4.5) *{\vphantom{bq} La_{n}};
(6.5, 4.5) *{\cdots};
(7.5, 4.5) *{\vphantom{bq} La_{z-1}};
(8.5, 4.5) *{\vphantom{bq} 0};
(9.5, 4.5) *{\cdots};
(10.5, 4.5) *{\cdots};
(11.5, 4.5) *{\vphantom{bq} 0};
%
(0,0); (12,0) **\dir{-};
(0,1); (12,1) **\dir{-};
(0,0); (0,1) **\dir{-};
(1,0); (1,1) **\dir{-};
(2,0); (2,1) **\dir{-};
(3,0); (3,1) **\dir{-};
(4,0); (4,1) **\dir{-};
(5,0); (5,1) **\dir{-};
(6,0); (6,1) **\dir{-};
(7,0); (7,1) **\dir{-};
(8,0); (8,1) **\dir{-};
(9,0); (9,1) **\dir{-};
(10,0); (10,1) **\dir{-};
(11,0); (11,1) **\dir{-};
(12,0); (12,1) **\dir{-};
%
(0.5, 0.5) *{\vphantom{bq} \hat a_0};
(1.5, 0.5) *{\vphantom{bq} \hat a_1};
(2.5, 0.5) *{\cdots};
(3.5, 0.5) *{\cdots};
(4.5, 0.5) *{\vphantom{bq} \hat a_{n-1}};
(5.5, 0.5) *{\vphantom{bq} \hat a_n};
(6.5, 0.5) *{\vphantom{bq} \hat a_{n+1}};
(7.5, 0.5) *{\cdots};
(8.5, 0.5) *{\cdots};
(9.5, 0.5) *{\cdots};
(10.5, 0.5) *{\cdots};
(11.5, 0.5) *{\vphantom{bq} \hat a_{L-1}};
%
(4, 2); (4, 3) **\dir{-}; (6, 3) **\dir{-}; (6, 2) **\dir{-}; (4, 2) **\dir{-};
(5, 2.5) *{\text{ITFT}};
%
(0, 3.8); (0, 3.6) **\dir{-}; (4.95, 3.6) **\dir{-}; (4.95, 3.8) **\dir{-};
(4.5, 3); (4.5, 3.6) **\dir{-}*\dir{>};
(5.05, 3.8); (5.05, 3.6) **\dir{-}; (8, 3.6) **\dir{-}; (8, 3.8) **\dir{-};
(5.5, 3.6); (5.5, 3) **\dir{-}*\dir{>};
(0, 1.2); (0, 1.4) **\dir{-}; (4.95, 1.4) **\dir{-}; (4.95, 1.2) **\dir{-};
(4.5, 1.4); (4.5, 2) **\dir{-}*\dir{>};
(5.05, 1.2); (5.05, 1.4) **\dir{-}; (6, 1.4) **\dir{-}; (6, 1.2) **\dir{-};
(5.5, 2); (5.5, 1.4) **\dir{-}*\dir{>};
\end{xy} \]
\caption{The ITFT.}
\label{fig:ITFT}
\end{figure}

It is not obvious \emph{a priori} that the ITFT is well-defined, and in particular that the coordinates $\hat a_0, \ldots, \hat a_{n-1}, a_n, \ldots, a_{L-1}$ are linearly independent. Van der Hoeven deduced this from the correctness of his algorithm for computing the ITFT; it will follow in the same way from the proof of correctness of our cache-friendly ITFT algorithm in \S\ref{sec:ITFT}.

Van der Hoeven allowed the input and output coordinates to come from a wider class of subsets of $\{0, \ldots, L-1\}$. In this paper we restrict ourselves to the initial and final segments mentioned above, which suffices for our intended application to univariate polynomial multiplication.

The TFT and ITFT may be used to deduce a polynomial multiplication algorithm in $R[X]$ as follows. Suppose that $g, h \in R[X]$, and let $u = gh$. Let $z_1 = 1 + \deg g$, $z_2 = 1 + \deg h$, $n = z_1 + z_2 - 1$, and assume that $n \leq L$. Let $g_0, \ldots, g_{z_1-1}$ be the coefficients of $g$ and $h_0, \ldots, h_{z_2-1}$ be the coefficients of $h$. Compute
\[ \begin{aligned}
 (\hat g_0, \ldots, \hat g_{n-1}) & = \TFT(L, 1, z_1, n; (g_0, \ldots, g_{z_1-1})), \\
 (\hat h_0, \ldots, \hat h_{n-1}) & = \TFT(L, 1, z_2, n; (h_0, \ldots, h_{z_2-1})),
\end{aligned} \]
and then compute $\hat u_i = \hat g_i \hat h_i$ in $R$ for $0 \leq i < n$. Then $\hat u_0, \ldots, \hat u_{n-1}$ are the first $n$ Fourier coefficients of $u$, and moreover $u_n = \cdots = u_{L-1} = 0$ since $n = \deg u + 1$. Therefore we recover $u$ via
 \[ (Lu_0, \ldots, Lu_{n-1}) = \ITFT(L, 1, n, n, 0; (\hat u_0, \ldots, \hat u_{n-1})). \]
(This multiplication algorithm has not used the parameters $f$ or $\zeta$ in a nontrivial way; these enter the picture when the algorithms are called recursively in \S\ref{sec:TFT} and \S\ref{sec:ITFT}.)

The standard FFT algorithms compute the DFT (or inverse DFT) using $\ell L/2$ `butterfly operations'. In contrast, van der Hoeven showed that the TFT and ITFT may be computed using at most $\ell n/2 + L$ butterfly operations, and we will see that this estimate holds for our cache-friendly TFT and ITFT algorithms as well. Furthermore, in the multiplication algorithm sketched above, only $n$ pointwise multiplications are performed, compared to the $L$ multiplications incurred by the standard FFT method. Therefore, in this simplified algebraic complexity model, the ratio of the running time of the TFT/ITFT-based multiplication algorithm to the running time of the usual FFT multiplication algorithm is $n/L + O(\ell^{-1})$, indicating that the performance is relatively smooth as a function of $n$.

Algorithms \ref{algo:TFT} and \ref{algo:ITFT} below ($\cfTFT$ and $\cfITFT$) implement the TFT and ITFT in a cache-friendly manner. They operate on an array $x_0, \ldots, x_{L-1}$, where $L = 2^\ell$. In general all $L$ elements of the array, even those elements not containing input or output, are used in intermediate computations.

For the TFT, the first $z$ elements are expected to contain the inputs $a_0, \ldots, a_{z-1}$, and the outputs $\hat a_0, \ldots, \hat a_{n-1}$ are written in-place to the same array. For the ITFT, the first $z$ elements are expected to contain the inputs $\hat a_0, \ldots, \hat a_{n-1}, L a_n, \ldots, L a_{z-1}$, and the outputs $L a_0, \ldots, L a_{n-1}$ (optionally followed by $\hat a_n$ if $f = 1$) are written in-place to the same array.

Both algorithms make use of the following well-known decomposition of \eqref{eq:dft}. Let $L = L_1 L_2$ where $L_1 = 2^{\ell_1}$ and $L_2 = 2^{\ell_2}$ (so that $\ell_1 + \ell_2 = \ell$). Write $i = i_2 + L_2 i_1$ where $0 \leq i_1 < L_1$ and $0 \leq i_2 < L_2$, and similarly for $j$. Then $j' = j_1' + L_1 j_2'$, where $j_1'$ and $j_2'$ are respectively the length-$\ell_1$ and length-$\ell_2$ bit-reversals of $j_1$ and $j_2$. We obtain
\[ \begin{aligned}
 \hat a_j = \hat a_{j_2 + L_2 j_1}
  & = \zeta^{j_1' + L_1 j_2'} \sum_{i_2=0}^{L_2-1} \sum_{i_1=0}^{L_1-1} \omega_L^{(i_2 + L_2 i_1)(j_1' + L_1 j_2')} a_{i_2 + L_2 i_1} \\
  & = (\zeta^{L_1})^{j_2'} \sum_{i_2=0}^{L_2-1} \omega_{L_2}^{i_2 j_2'} \left( (\zeta \omega_L^{i_2})^{j_1'} \sum_{i_1=0}^{L_1-1} \omega_{L_1}^{i_1 j_1'} a_{i_2 + L_2 i_1} \right). 
\end{aligned} \]
Therefore if we put
\begin{equation}
\label{eq:b-cols}
 b_k = b_{k_2 + L_2 k_1} = (\zeta \omega_L^{k_2})^{k_1'} \sum_{m=0}^{L_1-1} \omega_{L_1}^{m k_1'} a_{k_2 + L_2 m},
\end{equation}
we obtain
\begin{equation}
\label{eq:b-rows}
 \hat a_j = (\zeta^{L_1})^{j_2'} \sum_{r=0}^{L_2-1} \omega_{L_2}^{r j_2'} b_{r + L_2 j_1}.
\end{equation}
In other words, if $a$, $b$ and $\hat a$ are thought of as $L_1 \times L_2$ matrices, then $b$ is the result of applying an appropriately weighted DFT to each of the columns of $a$, and $\hat a$ is the result of applying an appropriately weighted DFT to each of the rows of $b$.

For the base case $L = 2$ the routines compute the TFT/ITFT directly. If $L = 2^\ell \geq 4$, they write $L = L_1 L_2$ where $L_1 = 2^{\floor{\ell/2}}$ and $L_2 = 2^{\ceil{\ell/2}}$, so that $1 < L_1 < L$ and $1 < L_2 < L$. They treat the array as an $L_1 \times L_2$ matrix, and recurse into TFTs/ITFTs on the columns and rows. The column transforms correspond to recursively applying the TFT/ITFT to the transform given by \eqref{eq:b-cols}; the row transforms similarly correspond to the transform given by \eqref{eq:b-rows}. (Van der Hoeven's TFT and ITFT algorithms are essentially the special case obtained by taking $L_1 = 2$ and $L_2 = L/2$.)

We will denote by $c_u$ the $u$-th column $(x_u, x_{u+L_2}, \ldots, x_{u+(L_1-1)L_2})$ and by $r_u$ the $u$-th row $(x_{uL_2}, x_{uL_2+1}, \ldots, x_{uL_2+L_2-1})$. A real implementation would use auxiliary variables to describe such sub-arrays; for example, a pointer to the first element and a stride parameter.

Common to both routines is the decomposition $n = n_2 + L_2 n_1$ where $0 \leq n_1 \leq L_1$ and $0 \leq n_2 < L_2$, and where $n_1 = L_1$ implies $n_2 = 0$. This partitions the first $n$ cells of the array into $n_1$ complete rows followed by $n_2$ cells in the subsequent row. The parameter $z$ is decomposed similarly into $z_1$ and $z_2$.

\section{A cache-friendly TFT}
\label{sec:TFT}

We first consider the TFT; the idea is to compute only those parts of the DFT that are requested. We handle the column transforms first, followed by the row transforms.

\begin{algorithm}
\nllabel{algo:TFT}
\SetLine
\dontprintsemicolon
\KwIn{$L = 2^\ell \geq 2$, $\zeta \in R^\times$, \newline $1 \leq z \leq L$, $1 \leq n \leq L$, \newline $x_i = a_i$ for $0 \leq i < z$}\;
\KwOut{$x_i = \hat a_i$ for $0 \leq i < n$}\;
\medskip
\If{$L = 2$}{
  \tcp{base case}\;
  \lIf{$n = 2$ \KwAnd $z = 2$}{$(x_0, x_1) \assign (x_0 + x_1, \zeta(x_0 - x_1))$}\nllabel{line:TFT-n2z2}\;
  \lIf{$n = 2$ \KwAnd $z = 1$}{$x_1 \assign \zeta x_0$}\nllabel{line:TFT-n2z1}\;
  \lIf{$n = 1$ \KwAnd $z = 2$}{$x_0 \assign x_0 + x_1$}\nllabel{line:TFT-n1z2}\;
  \KwRet\;
}
\medskip
\tcp{recursive case}\;
$L_1 \assign 2^{\floor{\ell/2}}$, $L_2 \assign 2^{\ceil{\ell/2}}$\nllabel{line:TFT-split}\;
$n_2 \assign n \bmod L_2$, $n_1 \assign \floor{n / L_2}$, $n_1' \assign \ceil{n / L_2}$\nllabel{line:TFT-n-partition}\;
$z_2 \assign z \bmod L_2$, $z_1 \assign \floor{z / L_2}$\nllabel{line:TFT-z-partition}\;
\lIf{$z_1 > 0$}{$z_2' \assign L_2$} \lElse{$z_2' \assign z_2$}\nllabel{line:TFT-z2'}\;
\medskip
\tcp{column transforms}
\lFor{$0 \leq u < z_2$}{$\cfTFT(L_1, \omega_L^u \zeta, z_1 + 1, n_1'; c_u)$}\nllabel{line:TFT-begin-cols}\;
\lFor{$z_2 \leq u < z_2'$}{$\cfTFT(L_1, \omega_L^u \zeta, z_1, n_1'; c_u)$}\nllabel{line:TFT-end-cols}\;
\medskip
\tcp{row transforms}
\lFor{$0 \leq u < n_1$}{$\cfTFT(L_2, \zeta^{L_1}, z_2', L_2; r_u)$}\nllabel{line:TFT-begin-rows}\;
\lIf{$n_2 > 0$}{$\cfTFT(L_2, \zeta^{L_1}, z_2', n_2; r_{n_1})$}\nllabel{line:TFT-end-rows}\;
\caption{$\cfTFT(L, \zeta, z, n; (x_0, \ldots, x_{L-1}))$}
\end{algorithm}

\begin{thm}
\label{thm:TFT}
Algorithm \ref{algo:TFT} correctly computes the TFT. The base case is executed at most $\min((n-1)\ell/2 + L - 1, L \ell/2)$ times.
\end{thm}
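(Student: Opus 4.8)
The plan is to prove correctness by induction on $\ell$, and then to establish the bound on the number of base-case executions by a separate induction, also on $\ell$, using the recursion structure exposed by the algorithm.

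For correctness, the base case $L = 2$ is a direct verification: one checks the three displayed assignments against the definition \eqref{eq:dft}, noting that $j' = j$ for $\ell = 1$, so $\hat a_0 = a_0 + a_1$ and $\hat a_1 = \zeta(a_0 - a_1)$, and that the cases $z = 1$ (so $a_1 = 0$) and $n = 1$ (only $\hat a_0$ wanted) specialize correctly. For the recursive step, I would invoke the decomposition \eqref{eq:b-cols}--\eqref{eq:b-rows}: viewing the array as an $L_1 \times L_2$ matrix, column $u$ holds $(a_u, a_{u + L_2}, \ldots)$, and applying the TFT of length $L_1$ with twist $\omega_L^u \zeta$ to that column produces exactly the $b$-values of \eqref{eq:b-cols}. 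The subtlety here is the nonuniform input length: a column indexed by $u < z_2$ has one more nonzero entry than a column with $z_2 \le u < z_2'$ (reflecting $z = z_2 + L_2 z_1$ with a partial final row), and columns with $u \ge z_2'$ are entirely zero and need no transform — this is precisely what the two \textbf{for} loops on lines \ref{line:TFT-begin-cols}--\ref{line:TFT-end-cols} encode, with $z_2'$ chosen on line \ref{line:TFT-z2'}. One must also check that requesting $n_1'=\ceil{n/L_2}$ rows of output from each column transform supplies exactly the $b$-entries needed by the subsequent row transforms: the first $n_1$ rows are computed in full (length $L_2$), and if $n_2 > 0$ a partial row of length $n_2$ is computed, which needs all $L_2$ of the $b$-values in rows $0, \ldots, n_1$, i.e.\ $n_1' = n_1 + 1$ column outputs. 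Each row transform is then an application of the length-$L_2$ TFT with twist $\zeta^{L_1}$ to \eqref{eq:b-rows}, again with a ragged input length $z_2'$, and the induction hypothesis finishes the argument. Throughout, one verifies the side conditions $1 \le z_i, n_i \le L_i$ (or the appropriate partial-row analogues) hold so the recursive calls are legal.

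For the base-case count, let $B(\ell, z, n)$ denote the number of base-case executions. The base case contributes $B = 1$ when $\ell = 1$. For $\ell \ge 2$ the recursion gives
\[
 B(\ell, z, n) = \sum_{u=0}^{z_2 - 1} B(\ell_1, z_1 + 1, n_1') + \sum_{u=z_2}^{z_2'-1} B(\ell_1, z_1, n_1') + \sum_{u=0}^{n_1 - 1} B(\ell_2, z_2', L_2) + [n_2 > 0]\, B(\ell_2, z_2', n_2),
\]
where $\ell_1 = \floor{\ell/2}$, $\ell_2 = \ceil{\ell/2}$. I would prove the two bounds separately. For $B \le L\ell/2$: by induction each column call costs at most $L_1 \ell_1 / 2$ and each row call at most $L_2 \ell_2/2$, and there are at most $L_2$ column calls and at most $L_1$ row calls, giving $\le L_2 \cdot L_1 \ell_1/2 + L_1 \cdot L_2 \ell_2/2 = L(\ell_1 + \ell_2)/2 = L\ell/2$. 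For the sharper bound $B \le (n-1)\ell/2 + L - 1$: the number of column calls is $z_2'$, which is at most $\min(n_1' \cdot \text{(something)}, L_2)$ — more usefully, it is at most $n_2 + n_1' L_1 / L_1$... here one uses that at most $n_1'$ rows receive output, so effectively the relevant columns number $\le L_2$ but the output rows number $n_1'$; the row transforms number $n_1 + [n_2>0] = n_1'$ (when $n_2 > 0$) and the final one is truncated to length $n_2$. Plugging the inductive bound $B(\ell_1, \cdot, n_1') \le (n_1' - 1)\ell_1/2 + L_1 - 1$ into the column sum and $B(\ell_2, \cdot, L_2) \le (L_2-1)\ell_2/2 + L_2 - 1$, $B(\ell_2, \cdot, n_2) \le (n_2-1)\ell_2/2 + L_2 - 1$ into the row sum, one collects terms and checks the total is $\le (n-1)\ell/2 + L - 1$ using $n = n_2 + L_2 n_1$ and $n_1' \le n_1 + 1$. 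The bookkeeping splits naturally according to whether $n_2 = 0$ or $n_2 > 0$.

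The main obstacle I anticipate is the second inequality: the arithmetic of tracking how the $\ell_1/2$ and $\ell_2/2$ contributions, the additive $L_i - 1$ terms (which, summed over up to $L_2$ columns, could a priori blow up), and the ragged row/column counts combine to land exactly at $(n-1)\ell/2 + L - 1$. The key structural facts that make it work — and which I would isolate as a lemma — are that only $n_1'$ column outputs are ever requested (so the "$L_1 - 1$" per column is multiplied by at most the number of columns, but the "$(n_1'-1)\ell_1/2$" savings compound), that $z_2' \le L_2$ and the number of row transforms is $\le n_1'$, and that $n_1' L_2 \le n + L_2 - 1$. Getting these estimates to telescope cleanly, rather than losing a constant factor, is the delicate part; I expect to need to treat $n_2 = 0$ and $n_2 > 0$ as separate cases and possibly to strengthen the inductive hypothesis slightly (e.g.\ replacing $n-1$ by a quantity that behaves better under the decomposition $n \mapsto (n_1', L_2)$ on columns).
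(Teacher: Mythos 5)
Your correctness argument is essentially the paper's and is sound (the base-case verification, the identification of the column/row recursive calls with \eqref{eq:b-cols} and \eqref{eq:b-rows}, the ragged column lengths controlled by $z_2'$, and the need for $n_1' = \ceil{n/L_2}$ column outputs).

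The gap is in your proof of the bound $(n-1)\ell/2 + L - 1$. You propose to bound the $n_1$ full-row transforms by $B(\ell_2, z_2', L_2) \le (L_2-1)\ell_2/2 + L_2 - 1$. That is the \emph{wrong branch of the $\min$} in the inductive hypothesis: for $n' = L_2$ the two candidates satisfy $(L_2-1)\ell_2/2 + L_2 - 1 - L_2\ell_2/2 = L_2 - 1 - \ell_2/2 > 0$, so the one you chose is strictly looser, and the slack compounds over $n_1$ rows. A concrete counterexample to your proposed bookkeeping: take $L = 16$, $L_1 = L_2 = 4$, $\ell_1 = \ell_2 = 2$, $n = 9$ (so $n_1 = 2$, $n_2 = 1$, $n_1' = 3$). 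Your plug-ins give at most $4\cdot(2\cdot 1 + 3) + 2\cdot(3\cdot 1 + 3) + 1\cdot(0 + 3) = 20 + 12 + 3 = 35$, which exceeds the target $(9-1)\cdot 4/2 + 16 - 1 = 31$. The paper's proof of this same bound mixes the two forms asymmetrically: it uses the $(n'-1)\ell'/2 + L' - 1$ estimate for the column transforms and for the one partial row transform, but uses the $L'\ell'/2$ estimate for the $n_1$ full row transforms. With that substitution the same example gives $20 + 2\cdot 4 + 3 = 31$, landing exactly on the target. This mixed choice — legitimate because the inductive hypothesis hands you the $\min$, so you may take whichever branch is sharper at each subproblem — is the missing ingredient. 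No strengthening of the inductive hypothesis is needed; the statement as given already closes once the full-row calls are charged at $L_2\ell_2/2$ rather than at $(L_2-1)\ell_2/2 + L_2 - 1$.
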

\begin{proof}
We first consider the base case $L = 2$. The relevant DFT is given by $(\hat a_0, \hat a_1) = (a_0 + a_1, \zeta(a_0 - a_1))$. If $z = 1$ then $a_1 = 0$, and the transform becomes simply $(\hat a_0, \hat a_1) = (a_0, \zeta a_0)$. If $n = 2$ then both $\hat a_0$ and $\hat a_1$ must be computed; if $n = 1$ then only $\hat a_0$ is needed. Lines \ref{line:TFT-n2z2}--\ref{line:TFT-n1z2} handle the various cases.

\begin{figure}[ht]
\[ 
\begin{xy} (0,0);<3mm,0mm>:<0mm,3mm>::
{\grid{a}};
{\row7  \cna  \cna  \cna  \cna  \cna  \cnz  \cnz  \cnz  };
{\row6  \cnz  \cnz  \cnz  \cnz  \cnz  \cnz  \cnz  \cnz  };
{\row5  \cnz  \cnz  \cnz  \cnz  \cnz  \cnz  \cnz  \cnz  };
{\row4  \cnz  \cnz  \cnz  \cnz  \cnz  \cnz  \cnz  \cnz  };
{\row3  \cnz  \cnz  \cnz  \cnz  \cnz  \cnz  \cnz  \cnz  };
{\row2  \cnz  \cnz  \cnz  \cnz  \cnz  \cnz  \cnz  \cnz  };
{\row1  \cnz  \cnz  \cnz  \cnz  \cnz  \cnz  \cnz  \cnz  };
{\row0  \cnz  \cnz  \cnz  \cnz  \cnz  \cnz  \cnz  \cnz  };
(0, 8.2); (0, 8.4) **\dir{-}; (5, 8.4) **\dir{-}; (5, 8.2); **\dir{-};
(2.5, 9) *{z_2};
\end{xy}
\quad
\begin{xy} (0,0);<3mm,0mm>:<0mm,3mm>::
{\grid{b}};
{\row7  \cna  \cna  \cna  \cna  \cna  \cna  \cna  \cna};
{\row6  \cna  \cna  \cna  \cna  \cna  \cna  \cna  \cna};
{\row5  \cna  \cna  \cna  \cna  \cna  \cna  \cna  \cna};
{\row4  \cna  \cna  \cna  \cna  \cna  \cna  \cna  \cna};
{\row3  \cna  \cna  \cna  \cna  \cna  \cna  \cna  \cna};
{\row2  \cnz  \cnz  \cnz  \cnz  \cnz  \cnz  \cnz  \cnz};
{\row1  \cnz  \cnz  \cnz  \cnz  \cnz  \cnz  \cnz  \cnz};
{\row0  \cnz  \cnz  \cnz  \cnz  \cnz  \cnz  \cnz  \cnz};
(-0.2, 3); (-0.4, 3) **\dir{-}; (-0.4, 8) **\dir{-}; (-0.2, 8) **\dir{-}; (-1, 5.5) *{z_1};
\end{xy}
\quad
\begin{xy} (0,0);<3mm,0mm>:<0mm,3mm>::
{\grid{c}};
{\row7  \cna  \cna  \cna  \cna  \cna  \cna  \cna  \cna};
{\row6  \cna  \cna  \cna  \cna  \cna  \cna  \cna  \cna};
{\row5  \cna  \cna  \cna  \cna  \cna  \cna  \cna  \cna};
{\row4  \cna  \cna  \cna  \cna  \cna  \cna  \cna  \cna};
{\row3  \cna  \cna  \cna  \cna  \cna  \cna  \cna  \cna};
{\row2  \cna  \cna  \cna  \cna  \cna  \cnz  \cnz  \cnz};
{\row1  \cnz  \cnz  \cnz  \cnz  \cnz  \cnz  \cnz  \cnz};
{\row0  \cnz  \cnz  \cnz  \cnz  \cnz  \cnz  \cnz  \cnz};
(0, 8.2); (0, 8.4) **\dir{-}; (5, 8.4) **\dir{-}; (5, 8.2); **\dir{-};
(2.5, 9) *{z_2};
(-0.2, 3); (-0.4, 3) **\dir{-}; (-0.4, 8) **\dir{-}; (-0.2, 8) **\dir{-}; (-1, 5.5) *{z_1};
\end{xy}
\]
\caption{Before line \ref{line:TFT-begin-cols} of $\cfTFT$.}
\label{fig:TFT-input}
\end{figure}

\begin{figure}[ht]
\[ 
\begin{xy} (0,0);<3mm,0mm>:<0mm,3mm>::
{\grid{a}};
{\row7  \cnb  \cnb  \cnb  \cnb  \cnb  \cnz  \cnz  \cnz};
{\row6  \cnb  \cnb  \cnb  \cnb  \cnb  \cnz  \cnz  \cnz};
{\row5  \cnb  \cnb  \cnb  \cnb  \cnb  \cnz  \cnz  \cnz};
{\row4  \cnb  \cnb  \cnb  \cnb  \cnb  \cnz  \cnz  \cnz};
{\row3  \cnb  \cnb  \cnb  \cnb  \cnb  \cnz  \cnz  \cnz};
{\row2  \cnb  \cnb  \cnb  \cnb  \cnb  \cnz  \cnz  \cnz};
{\row1  \cns  \cns  \cns  \cns  \cns  \cnz  \cnz  \cnz};
{\row0  \cns  \cns  \cns  \cns  \cns  \cnz  \cnz  \cnz};
(0, 8.2); (0, 8.4) **\dir{-}; (5, 8.4) **\dir{-}; (5, 8.2); **\dir{-};
(2.5, 9.3) *{z_2'};
(-0.2, 2); (-0.4, 2) **\dir{-}; (-0.4, 8) **\dir{-}; (-0.2, 8) **\dir{-}; (-1.1, 5) *{n_1'};
\end{xy}
\quad
\begin{xy} (0,0);<3mm,0mm>:<0mm,3mm>::
{\grid{b}};
{\row7  \cnb  \cnb  \cnb  \cnb  \cnb  \cnb  \cnb  \cnb};
{\row6  \cnb  \cnb  \cnb  \cnb  \cnb  \cnb  \cnb  \cnb};
{\row5  \cnb  \cnb  \cnb  \cnb  \cnb  \cnb  \cnb  \cnb};
{\row4  \cnb  \cnb  \cnb  \cnb  \cnb  \cnb  \cnb  \cnb};
{\row3  \cnb  \cnb  \cnb  \cnb  \cnb  \cnb  \cnb  \cnb};
{\row2  \cnb  \cnb  \cnb  \cnb  \cnb  \cnb  \cnb  \cnb};
{\row1  \cns  \cns  \cns  \cns  \cns  \cns  \cns  \cns};
{\row0  \cns  \cns  \cns  \cns  \cns  \cns  \cns  \cns};
(0, 8.2); (0, 8.4) **\dir{-}; (8, 8.4) **\dir{-}; (8, 8.2); **\dir{-};
(4, 9.3) *{z_2'};
(-0.2, 2); (-0.4, 2) **\dir{-}; (-0.4, 8) **\dir{-}; (-0.2, 8) **\dir{-}; (-1.1, 5) *{n_1'};
\end{xy}
\]
\caption{After line \ref{line:TFT-end-cols} of $\cfTFT$.}
\label{fig:TFT-after-cols}
\end{figure}

\begin{figure}[ht]
\[ 
\begin{xy} (0,0);<3mm,0mm>:<0mm,3mm>::
{\grid{a}};
{\row7  \cnha \cnha \cnha \cnha \cnha \cnha \cnha \cnha};
{\row6  \cnha \cnha \cnha \cnha \cnha \cnha \cnha \cnha};
{\row5  \cnha \cnha \cnha \cnha \cnha \cnha \cnha \cnha};
{\row4  \cnha \cnha \cnha \cnha \cnha \cnha \cnha \cnha};
{\row3  \cnha \cnha \cnha \cnha \cnha \cnha \cnha \cnha};
{\row2  \cnha \cnha \cnha \cns  \cns  \cns  \cns  \cns};
{\row1  \cns  \cns  \cns  \cns  \cns  \cns  \cns  \cns};
{\row0  \cns  \cns  \cns  \cns  \cns  \cns  \cns  \cns};
(0, 8.2); (0, 8.4) **\dir{-}; (3, 8.4) **\dir{-}; (3, 8.2); **\dir{-};
(1.5, 9) *{n_2};
(-0.2, 3); (-0.4, 3) **\dir{-}; (-0.4, 8) **\dir{-}; (-0.2, 8) **\dir{-}; (-1.1, 5.5) *{n_1};
\end{xy}
\quad
\begin{xy} (0,0);<3mm,0mm>:<0mm,3mm>::
{\grid{b}};
{\row7  \cnha \cnha \cnha \cnha \cnha \cnha \cnha \cnha};
{\row6  \cnha \cnha \cnha \cnha \cnha \cnha \cnha \cnha};
{\row5  \cnha \cnha \cnha \cnha \cnha \cnha \cnha \cnha};
{\row4  \cnha \cnha \cnha \cnha \cnha \cnha \cnha \cnha};
{\row3  \cnha \cnha \cnha \cnha \cnha \cnha \cnha \cnha};
{\row2  \cns  \cns  \cns  \cns  \cns  \cns  \cns  \cns};
{\row1  \cns  \cns  \cns  \cns  \cns  \cns  \cns  \cns};
{\row0  \cns  \cns  \cns  \cns  \cns  \cns  \cns  \cns};
(-0.2, 3); (-0.4, 3) **\dir{-}; (-0.4, 8) **\dir{-}; (-0.2, 8) **\dir{-}; (-1.1, 5.5) *{n_1};
\end{xy}
\]
\caption{After lines \ref{line:TFT-begin-rows}--\ref{line:TFT-end-rows} of $\cfTFT$.}
\label{fig:TFT-rows}
\end{figure}

Now we consider the recursive case, for $L = 2^\ell \geq 4$. Figures \ref{fig:TFT-input}(a)--(c) show the possible input configurations, for $L = 64$, $L_1 = L_2 = 8$. Cells labelled $a$ contain some $a_i$; cells labelled $\cdot$ contain uninitialised data, but implicitly represent $a_i = 0$. Diagram (a) shows the case $z_1 = 0$, in which case $z_2' = z_2$. Diagram (b) shows the case $z_1 > 0$ and $z_2 = 0$, and diagram (c) shows the case $z_1 > 0$, $z_2 > 0$. In these latter cases $z_2' = L_2$. Lines \ref{line:TFT-begin-cols}--\ref{line:TFT-end-cols} apply the TFT recursively to the columns to evaluate the first $n_1'$ rows of \eqref{eq:b-cols}. Line \ref{line:TFT-begin-cols} handles those columns containing $z_1 + 1$ nonzero entries; line \ref{line:TFT-end-cols} handles those containing only $z_1$ nonzero entries.

After lines \ref{line:TFT-begin-cols}--\ref{line:TFT-end-cols} have executed, we have $x_i = b_i$ for $0 \leq i_1 < n_1'$ and $0 \leq i_2 < z_2'$, and we also know that $b_i = 0$ for $z_2' \leq i < L_2$ (the latter statement is non-vacuous only if $z_1 = 0$). Figure \ref{fig:TFT-after-cols} illustrates the situation: cells labelled $b$ contain some $b_i$; cells labelled $\cdot$ contain unspecified data but implicitly represent $b_i = 0$; cells labelled $?$ are meaningless. Diagram (a) shows the case $z_2' < L_2$, and diagram (b) shows $z_2' = L_2$.

Next, lines \ref{line:TFT-begin-rows}--\ref{line:TFT-end-rows} apply the TFT recursively to the first $n_1'$ rows to evaluate \eqref{eq:b-rows}. Figure \ref{fig:TFT-rows} shows the possible output configurations. Cells labelled $\hat a$ contain some $\hat a_i$; cells labelled $?$ contain meaningless data. Diagram (a) shows the case $n_2 > 0$, where $n_1' = n_1 + 1$, and diagram (b) shows the case $n_2 = 0$, where $n_1' = n_1$. Line \ref{line:TFT-begin-rows} handles the first $n_1$ rows, where $\hat a_i$ must be computed for $0 \leq i_2 < L_2$; line \ref{line:TFT-end-rows} handles the remaining partial row, where $\hat a_i$ is needed only for $0 \leq i_2 < n_2$.

We prove the complexity estimate by induction on $L$. For $L = 2$ the bound is $\min((n-1)/2 + 1, 1) = 1$, so the estimate holds. Now assume that $L \geq 4$, and let $\ell_1 = \log_2 L_1$ and $\ell_2 = \log_2 L_2$.

We first verify that the number of calls to the base case is bounded by $L\ell/2$. By induction, lines \ref{line:TFT-begin-cols}--\ref{line:TFT-end-cols} call the base case at most $L_2(L_1 \ell_1/2)$ times, and lines \ref{line:TFT-begin-rows}--\ref{line:TFT-end-rows} call it at most $n_1'(L_2 \ell_2/2) \leq L_1(L_2 \ell_2/2)$ times. The sum is $L_1 L_2 (\ell_1 + \ell_2)/2 = L\ell/2$.

Second, we must verify that the number of calls is bounded by $(n-1)\ell/2 + L - 1$. Let $\delta = n_1' - n_1 \in \{0, 1\}$. Lines \ref{line:TFT-begin-cols}--\ref{line:TFT-end-cols} call the base case at most $L_2 ((n_1 + \delta - 1) \ell_1/2 + L_1 - 1)$ times, line \ref{line:TFT-begin-rows} calls it at most $n_1(L_2 \ell_2/2)$ times, and line \ref{line:TFT-end-rows} calls it at most $\delta ((n_2 - 1)\ell_2/2 + L_2 - 1)$ times. The sum of these terms is $\frac 12 X + Y$ where
\[ \begin{aligned}
 X & = L_2(n_1 - 1)\ell_1 + n_1 L_2 \ell_2 + \delta(L_2 \ell_1 + (n_2 - 1)\ell_2) \\
   & = (n - n_2)\ell - L_2 \ell_1 + \delta(L_2 \ell_1 + (n_2 - 1)\ell_2) \\
   & = (n - 1)\ell + (\delta - 1)L_2 \ell_1 + (n_2 - 1)(\delta \ell_2 - \ell), \\
 Y & = L_2 (L_1 - 1) + \delta(L_2 - 1) = L - 1 + (\delta - 1)(L_2 - 1).
\end{aligned} \]
If $\delta = 1$, then $n_2 \geq 1$ and $(n_2 - 1)(\delta \ell_2 - \ell) = -\ell_1(n_2 - 1) \leq 0$. If $\delta = 0$ then $n_2 = 0$ and $(\delta - 1)L_2\ell_1 + (n_2 - 1)(\delta \ell_2 - \ell) = -L_2\ell_1 + \ell_1 + \ell_2$, which is non-positive since $L_2 = 2^{\ell_2} \geq \ell_2 + 1$. The desired estimate holds in both cases.
\end{proof}

\section{A cache-friendly inverse TFT}
\label{sec:ITFT}

The ITFT cannot be implemented by simply running the TFT in reverse, because when the ITFT commences there is insufficient information to perform all the row transforms. In particular, if $n \not\equiv 0 \bmod{L_2}$, then the $\floor{n/L_2}$-th row contains some $\hat a_i$ but does not contain the corresponding $b_i$ needed to apply \eqref{eq:b-rows}.

To circumvent this difficulty, we proceed as follows. We first perform as many row transforms as possible. We are then able to perform \emph{some} of the column transforms. When these are complete, it becomes possible to execute the last row transform that was inaccessible before. After this row transform, the remainder of the column transforms may be completed. Algorithm \ref{algo:ITFT} gives a precise statement.

\begin{algorithm}
\nllabel{algo:ITFT}
\SetLine
\dontprintsemicolon
\KwIn{$L = 2^\ell \geq 2$, $\zeta \in R^\times$, \newline $f \in \{0, 1\}$, $1 \leq n + f \leq L$, $1 \leq z \leq L$, $z \geq n$, \newline $x_i = \hat a_i$ for $0 \leq i < n$, $x_i = La_i$ for $n \leq i < z$}\;
\KwOut{$x_i = La_i$ for $0 \leq i < n$, \newline $x_n = \hat a_n$ if $f = 1$}\;
\medskip
\If{$L = 2$}{
  \tcp{base case}\;
  \lIf{$n = 2$}{$(x_0, x_1) \assign (x_0 + \zeta^{-1} x_1, x_0 - \zeta^{-1} x_1)$}\nllabel{line:ITFT-n2}\;
  \lIf{$n = 1$ \KwAnd $f = 1$ \KwAnd $z = 2$}{$(x_0, x_1) \assign (2x_0 - x_1, \zeta(x_0 - x_1))$}\nllabel{line:ITFT-n1f1z2}\;
  \lIf{$n = 1$ \KwAnd $f = 1$ \KwAnd $z = 1$}{$(x_0, x_1) \assign (2x_0, \zeta x_0)$}\nllabel{line:ITFT-n1-begin-other}\;
  \lIf{$n = 1$ \KwAnd $f = 0$ \KwAnd $z = 2$}{$x_0 \assign 2x_0 - x_1$}\;
  \lIf{$n = 1$ \KwAnd $f = 0$ \KwAnd $z = 1$}{$x_0 \assign 2x_0$}\nllabel{line:ITFT-n1-end-other}\;
  \lIf{$n = 0$ \KwAnd $z = 2$}{$x_0 \assign (x_0 + x_1)/2$}\nllabel{line:ITFT-n0z2}\;
  \lIf{$n = 0$ \KwAnd $z = 1$}{$x_0 \assign x_0/2$}\nllabel{line:ITFT-n0z1}\;
  \KwRet\;
}
\medskip
\tcp{recursive case}\;

$L_1 \assign 2^{\floor{\ell/2}}$, $L_2 \assign 2^{\ceil{\ell/2}}$\;
$n_2 \assign n \bmod L_2$, $n_1 \assign \floor{n / L_2}$\;
$z_2 \assign z \bmod L_2$, $z_1 \assign \floor{z / L_2}$\;
\lIf{$n_2 + f > 0$}{$f' \assign 1$} \lElse{$f' \assign 0$}\;
\lIf{$z_1 > 0$}{$z_2' \assign L_2$} \lElse{$z_2' \assign z_2$}\;
$m \assign \min(n_2, z_2)$, $m' \assign \max(n_2, z_2)$\;
\medskip
\tcp{row tranforms}
\lFor{$0 \leq u < n_1$}{$\cfITFT(L_2, \zeta^{L_1}, L_2, L_2, 0; r_u)$}\nllabel{line:ITFT-first-rows}\;
\medskip
\tcp{rightmost column transforms}
\lFor{$n_2 \leq u < m'$}{$\cfITFT(L_1, \omega_L^u \zeta, z_1 + 1, n_1, f'; c_u)$}\nllabel{line:ITFT-first-cols-1}\;
\lFor{$m' \leq u < z_2'$}{$\cfITFT(L_1, \omega_L^u \zeta, z_1, n_1, f'; c_u)$}\nllabel{line:ITFT-first-cols-2}\;
\medskip
\tcp{last row transform}
\lIf{$f' = 1$}{$\cfITFT(L_2, \zeta^{L_1}, z_2', n_2, f; r_{n_1})$}\nllabel{line:ITFT-last-row}\;
\medskip
\tcp{leftmost column transforms}
\lFor{$0 \leq u < m$}{$\cfITFT(L_1, \omega_L^u \zeta, z_1 + 1, n_1 + 1, 0; c_u)$}\nllabel{line:ITFT-last-cols-1}\;
\lFor{$m \leq u < n_2$}{$\cfITFT(L_1, \omega_L^u \zeta, z_1, n_1 + 1, 0; c_u)$}\nllabel{line:ITFT-last-cols-2}\;
\caption{$\cfITFT(L, \zeta, z, n, f; (x_0, \ldots, x_{L-1}))$}
\end{algorithm}

\begin{thm}
\label{thm:ITFT}
Algorithm \ref{algo:ITFT} correctly computes the ITFT. The base case is executed at most $\min((n+f-1)\ell/2 + L - 1, L \ell/2)$ times.
\end{thm}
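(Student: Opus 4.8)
The plan is to prove correctness by induction on $L$, treating the base case $L=2$ by a direct verification of the seven formulas against the defining relation $(\hat a_0,\hat a_1)=(a_0+a_1,\zeta(a_0-a_1))$, solving for $a_0,a_1$ (recall $x_i$ holds $La_i=2a_i$, so the outputs carry the factor $2$), and checking the optional output $\hat a_1=\zeta(a_0-a_1)$ when $f=1$. For the recursive case I would track the state of the array through the five groups of calls, exactly as in the proof of Theorem~\ref{thm:TFT}, using the column/row decomposition \eqref{eq:b-cols}--\eqref{eq:b-rows}: the array starts holding $\hat a_i$ for $0\le i<n$ and $La_i$ for $n\le i<z$, i.e.\ in matrix terms the first $n_1$ full rows hold $\widehat{a\,}$-data, row $n_1$ holds $\hat a_i$ for $i_2<n_2$ and $La_i$ for $n_2\le i_2<z_2$, and rows $n_1{+}1,\dots$ hold $La_i$ for $i_2<z_2$ (with the usual zero padding beyond $z$). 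The first $n_1$ rows are complete ITFT inputs with $z=n=L_2$, so line~\ref{line:ITFT-first-rows} converts them in place to the corresponding rows of $b$ via \eqref{eq:b-rows} run backwards; this is the key point that makes the columns partially accessible.

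Next I would argue that after line~\ref{line:ITFT-first-rows} each column $c_u$ for $n_2\le u<z_2'$ now contains $b_i$ for $i_1<n_1$ (from the transformed rows), $L_1\cdot($column value of $a)$ for $n_1\le i_1<z_1$ (untouched, since those rows held $La_i$ and were not in the first $n_1$), and zeros beyond; crucially, for these columns the entry in row $n_1$ was $La_i$ (as $u\ge n_2$), not $\hat a_i$, so the column is a genuine ITFT instance with parameters $z_1{+}1$ (resp.\ $z_1$) and $n_1$, and flag $f'$. Lines~\ref{line:ITFT-first-cols-1}--\ref{line:ITFT-first-cols-2} thus produce $L_1 a$ in rows $<n_1$ and, when $f'=1$, the value $b_i$ in row $n_1$ of each such column. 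That last fact means row $n_1$ now holds $\hat a_i$ for $i_2<n_2$ together with $b_i$ for $n_2\le i_2<z_2'$, which is precisely an ITFT input of length $L_2$ with parameters $z_2',n_2,f$; line~\ref{line:ITFT-last-row} then recovers $b_i$ for $i_2<n_2$ in row $n_1$ (and $\hat a_{n_1 L_2+n_2}=\hat a_n$ if $f=1$, giving the promised extra output). Finally the leftmost columns $c_u$ for $0\le u<n_2$ are now complete ITFT instances with $n_1{+}1$ outputs and flag $0$: rows $<n_1$ hold $b_i$, row $n_1$ holds $b_i$, and rows $\ge n_1{+}1$ hold $L_1 a$-data or zero, with $z_1{+}1$ nonzero entries when $u<m=\min(n_2,z_2)$ and $z_1$ when $m\le u<n_2$ (here one uses $m\le n_2$ and that column $u\ge z_2$ has a zero, column $u<z_2$ does not). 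Lines~\ref{line:ITFT-last-cols-1}--\ref{line:ITFT-last-cols-2} complete the recovery of $La_i$ for all $i$ with $i_1\le n_1$, $i_2<n_2$, finishing the output. Well-definedness of the ITFT then follows, as promised in the text, because the algorithm exhibits an explicit inverse on the relevant coordinates.

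For the complexity bound I would again induct on $L$, as in Theorem~\ref{thm:TFT}. First the crude bound: line~\ref{line:ITFT-first-rows} costs $\le n_1(L_2\ell_2/2)$; the column lines~\ref{line:ITFT-first-cols-1}--\ref{line:ITFT-first-cols-2} and \ref{line:ITFT-last-cols-1}--\ref{line:ITFT-last-cols-2} together invoke at most $z_2'\le L_2$ column ITFTs of length $L_1$, each $\le L_1\ell_1/2$, so $\le L_2(L_1\ell_1/2)$; and line~\ref{line:ITFT-last-row} costs $\le L_2\ell_2/2$. Since $n_1<L_1$ when $f'=1$ forces the extra row, one checks $n_1(L_2\ell_2/2)+\text{(line \ref{line:ITFT-last-row})}\le L_1(L_2\ell_2/2)$, and the total is $\le L\ell/2$. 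For the sharper bound $(n+f-1)\ell/2+L-1$ I would write $n+f=(n_1 L_2+n_2)+f$, set $\delta=f'$ so that the ``active'' column height is $n_1+\delta$ for the rightmost block and $n_1+1$ for the leftmost block, apply the inductive hypothesis to each group, and collect the terms into a $\tfrac12 X+Y$ expression mirroring the TFT proof; the algebra is routine given $L_2\ge\ell_2+1$ and the constraints $n_2=0\Rightarrow f'=f$, $n_2\ge1\Rightarrow\delta=1$.

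The main obstacle I anticipate is the bookkeeping in the recursive correctness argument: one must very carefully verify, at each of lines~\ref{line:ITFT-first-cols-1}--\ref{line:ITFT-last-cols-2}, that the column or row being passed to the recursive call genuinely satisfies the ITFT precondition $z\ge n$ and the correct zero pattern, and in particular that the flag values $f'$ (for the rightmost columns and the last row) and $0$ (for the leftmost columns) are exactly what is needed so that the value deposited in row $n_1$ by the rightmost column calls is the $b_i$ that the last row transform consumes, and so that the leftmost column calls do \emph{not} overwrite row $n_1$ before that happens. Untangling the interplay of $m=\min(n_2,z_2)$ versus $m'=\max(n_2,z_2)$ in deciding which columns have $z_1+1$ versus $z_1$ nonzero entries, across both the pre- and post-last-row column blocks, is the delicate combinatorial core; once the state diagrams (analogous to Figures~\ref{fig:TFT-input}--\ref{fig:TFT-rows}) are pinned down the rest is formal.
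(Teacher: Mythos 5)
Your proposal is correct and follows essentially the same route as the paper: the base case is a direct check of the seven two-point formulas, the recursive correctness argument tracks the array state through lines~\ref{line:ITFT-first-rows}, \ref{line:ITFT-first-cols-1}--\ref{line:ITFT-first-cols-2}, \ref{line:ITFT-last-row}, \ref{line:ITFT-last-cols-1}--\ref{line:ITFT-last-cols-2} using the decomposition \eqref{eq:b-cols}--\eqref{eq:b-rows}, and the complexity bound is obtained by the same $\tfrac12 X + Y$ bookkeeping used in Theorem~\ref{thm:TFT}. The only cosmetic difference is that the paper organizes the state-tracking argument into four explicit cases ($z_1=0$; $z_1>0, n_2=0$; $z_1>0, n_2>0, n_2\le z_2$; $z_1>0, n_2>0, n_2>z_2$) with accompanying figures, whereas you argue the transitions uniformly and isolate the $m=\min(n_2,z_2)$ versus $m'=\max(n_2,z_2)$ distinction as the combinatorial crux — which is precisely what the paper's case split is encoding.
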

\begin{proof}
We first consider the base case $L = 2$. As before, the relevant DFT is given by $(\hat a_0, \hat a_1) = (a_0 + a_1, \zeta(a_0 - a_1))$. If $n = 2$, then we must have $z = 2$ and $f = 0$, and we are computing the map $(\hat a_0, \hat a_1) \mapsto (2a_0, 2a_1) = (\hat a_0 + \zeta^{-1} \hat a_1, \hat a_0 - \zeta^{-1} \hat a_1)$. This is handled by line \ref{line:ITFT-n2}. Now suppose that $n = 1$. If $f = 1$ and $z = 2$, we must compute the map $(\hat a_0, 2a_1) \mapsto (2a_0, \hat a_1) = (2\hat a_0 - 2a_1, \zeta(\hat a_0 - 2a_1))$ (van der Hoeven's `cross butterfly'). This is handled by line \ref{line:ITFT-n1f1z2}. Lines \ref{line:ITFT-n1-begin-other}--\ref{line:ITFT-n1-end-other} handle the analogous cases where $f = 0$ (the second output is not needed) or where $z = 1$ ($a_1$ is assumed to be zero). Finally suppose that $n = 0$. Then we must have $f = 1$. If $z = 2$, we must compute $(2a_0, 2a_1) \mapsto \hat a_0 = (2a_0 + 2a_1)/2$. This is handled by line \ref{line:ITFT-n0z2}. The $z = 1$ case (where we assume $a_1 = 0$) is handled by line \ref{line:ITFT-n0z1}.

We now suppose that $L \geq 4$ and consider the four cases below. Figures \ref{fig:ITFT-input}--\ref{fig:ITFT-done} illustrate the various stages of the algorithm for each of these cases. Cells labelled $a$, $b$ and $\hat a$ indicate respectively $La_i$, $L_2 b_i$ or $\hat a_i$; cells labelled $\cdot$ are uninitialised, but implicitly represent $a_i = 0$; cells containing $?$ contain unspecified data not used in subsequent computations. A symbol in parentheses indicates that the symbol is only valid if $f = 1$; if $f = 0$ the cell behaves like a $?$ cell. Cells in bold are those about to be transformed by a recursive call.

\begin{figure}
\[ 
\begin{xy} (0,0);<3mm,0mm>:<0mm,3mm>::
{\grid{a}};
{\row7  \cnha \cnha \cnha \cna  \cna  \cna  \cnz  \cnz};
{\row6  \cnz  \cnz  \cnz  \cnz  \cnz  \cnz  \cnz  \cnz};
{\row5  \cnz  \cnz  \cnz  \cnz  \cnz  \cnz  \cnz  \cnz};
{\row4  \cnz  \cnz  \cnz  \cnz  \cnz  \cnz  \cnz  \cnz};
{\row3  \cnz  \cnz  \cnz  \cnz  \cnz  \cnz  \cnz  \cnz};
{\row2  \cnz  \cnz  \cnz  \cnz  \cnz  \cnz  \cnz  \cnz};
{\row1  \cnz  \cnz  \cnz  \cnz  \cnz  \cnz  \cnz  \cnz};
{\row0  \cnz  \cnz  \cnz  \cnz  \cnz  \cnz  \cnz  \cnz};
(0, 8.2); (0, 8.4) **\dir{-}; (6, 8.4) **\dir{-}; (6, 8.2); **\dir{-}; (3, 9) *{n_2};
(0, -0.2); (0, -0.4) **\dir{-}; (3, -0.4) **\dir{-}; (3, -0.2); **\dir{-}; (0.5, -1) *{z_2};
\end{xy}
\hspace{5pt}
\begin{xy} (0,0);<3mm,0mm>:<0mm,3mm>::
{\grid{b}};
{\row7  \cbha \cbha \cbha \cbha \cbha \cbha \cbha \cbha};
{\row6  \cbha \cbha \cbha \cbha \cbha \cbha \cbha \cbha};
{\row5  \cbha \cbha \cbha \cbha \cbha \cbha \cbha \cbha};
{\row4  \cna  \cna  \cna  \cna  \cna  \cna  \cna  \cna };
{\row3  \cna  \cna  \cna  \cna  \cna  \cna  \cna  \cna };
{\row2  \cna  \cna  \cna  \cna  \cnz  \cnz  \cnz  \cnz };
{\row1  \cnz  \cnz  \cnz  \cnz  \cnz  \cnz  \cnz  \cnz };
{\row0  \cnz  \cnz  \cnz  \cnz  \cnz  \cnz  \cnz  \cnz };
(-0.2, 5); (-0.4, 5) **\dir{-}; (-0.4, 8) **\dir{-}; (-0.2, 8) **\dir{-}; (-1.1, 6.8) *{n_1};
(8.2, 3); (8.4, 3) **\dir{-}; (8.4, 8) **\dir{-}; (8.2, 8) **\dir{-}; (9.1, 5.2) *{z_1};
(0, -0.2); (0, -0.4) **\dir{-}; (4, -0.4) **\dir{-}; (4, -0.2); **\dir{-}; (0.5, -1) *{z_2};
\end{xy}
\hspace{-3pt}
\begin{xy} (0,0);<3mm,0mm>:<0mm,3mm>::
{\grid{c}};
{\row7  \cbha \cbha \cbha \cbha \cbha \cbha \cbha \cbha};
{\row6  \cbha \cbha \cbha \cbha \cbha \cbha \cbha \cbha};
{\row5  \cbha \cbha \cbha \cbha \cbha \cbha \cbha \cbha};
{\row4  \cnha \cnha \cna  \cna  \cna  \cna  \cna  \cna };
{\row3  \cna  \cna  \cna  \cna  \cna  \cna  \cna  \cna };
{\row2  \cna  \cna  \cna  \cna  \cnz  \cnz  \cnz  \cnz };
{\row1  \cnz  \cnz  \cnz  \cnz  \cnz  \cnz  \cnz  \cnz };
{\row0  \cnz  \cnz  \cnz  \cnz  \cnz  \cnz  \cnz  \cnz };
(-0.2, 5); (-0.4, 5) **\dir{-}; (-0.4, 8) **\dir{-}; (-0.2, 8) **\dir{-}; (-1.1, 6.8) *{n_1};
(0, 8.2); (0, 8.4) **\dir{-}; (2, 8.4) **\dir{-}; (2, 8.2); **\dir{-}; (1, 9) *{n_2};
(8.2, 3); (8.4, 3) **\dir{-}; (8.4, 8) **\dir{-}; (8.2, 8) **\dir{-}; (9.1, 5.2) *{z_1};
(0, -0.2); (0, -0.4) **\dir{-}; (4, -0.4) **\dir{-}; (4, -0.2); **\dir{-}; (0.5, -1) *{z_2};
\end{xy}
\hspace{-3pt}
\begin{xy} (0,0);<3mm,0mm>:<0mm,3mm>::
{\grid{d}};
{\row7  \cbha \cbha \cbha \cbha \cbha \cbha \cbha \cbha};
{\row6  \cbha \cbha \cbha \cbha \cbha \cbha \cbha \cbha};
{\row5  \cbha \cbha \cbha \cbha \cbha \cbha \cbha \cbha};
{\row4  \cnha \cnha \cnha \cnha \cnha \cnha \cna  \cna };
{\row3  \cna  \cna  \cna  \cna  \cna  \cna  \cna  \cna };
{\row2  \cna  \cna  \cna  \cna  \cnz  \cnz  \cnz  \cnz };
{\row1  \cnz  \cnz  \cnz  \cnz  \cnz  \cnz  \cnz  \cnz };
{\row0  \cnz  \cnz  \cnz  \cnz  \cnz  \cnz  \cnz  \cnz };
(-0.2, 5); (-0.4, 5) **\dir{-}; (-0.4, 8) **\dir{-}; (-0.2, 8) **\dir{-}; (-1.1, 6.8) *{n_1};
(0, 8.2); (0, 8.4) **\dir{-}; (6, 8.4) **\dir{-}; (6, 8.2); **\dir{-}; (3, 9) *{n_2};
(8.2, 3); (8.4, 3) **\dir{-}; (8.4, 8) **\dir{-}; (8.2, 8) **\dir{-}; (9.1, 5.2) *{z_1};
(0, -0.2); (0, -0.4) **\dir{-}; (4, -0.4) **\dir{-}; (4, -0.2); **\dir{-}; (0.5, -1) *{z_2};
\end{xy}
\]
\caption{Before line \ref{line:ITFT-first-rows} of $\cfITFT$. The bold rows are about to be transformed by line \ref{line:ITFT-first-rows}.}
\label{fig:ITFT-input}
\end{figure}

\begin{figure}
\[ 
\begin{xy} (0,0);<3mm,0mm>:<0mm,3mm>::
{\grid{a}};
{\row7  \cnha \cnha \cnha \cba  \cba  \cba  \cnz  \cnz};
{\row6  \cnz  \cnz  \cnz  \cbz  \cbz  \cbz  \cnz  \cnz};
{\row5  \cnz  \cnz  \cnz  \cbz  \cbz  \cbz  \cnz  \cnz};
{\row4  \cnz  \cnz  \cnz  \cbz  \cbz  \cbz  \cnz  \cnz};
{\row3  \cnz  \cnz  \cnz  \cbz  \cbz  \cbz  \cnz  \cnz};
{\row2  \cnz  \cnz  \cnz  \cbz  \cbz  \cbz  \cnz  \cnz};
{\row1  \cnz  \cnz  \cnz  \cbz  \cbz  \cbz  \cnz  \cnz};
{\row0  \cnz  \cnz  \cnz  \cbz  \cbz  \cbz  \cnz  \cnz};
\end{xy}
\quad
\begin{xy} (0,0);<3mm,0mm>:<0mm,3mm>::
{\grid{b}};
{\row7  \cbb  \cbb  \cbb  \cbb  \cbb  \cbb  \cbb  \cbb };
{\row6  \cbb  \cbb  \cbb  \cbb  \cbb  \cbb  \cbb  \cbb };
{\row5  \cbb  \cbb  \cbb  \cbb  \cbb  \cbb  \cbb  \cbb };
{\row4  \cba  \cba  \cba  \cba  \cba  \cba  \cba  \cba };
{\row3  \cba  \cba  \cba  \cba  \cba  \cba  \cba  \cba };
{\row2  \cba  \cba  \cba  \cba  \cbz  \cbz  \cbz  \cbz };
{\row1  \cbz  \cbz  \cbz  \cbz  \cbz  \cbz  \cbz  \cbz };
{\row0  \cbz  \cbz  \cbz  \cbz  \cbz  \cbz  \cbz  \cbz };
\end{xy}
\quad
\begin{xy} (0,0);<3mm,0mm>:<0mm,3mm>::
{\grid{c}};
{\row7  \cnb  \cnb  \cbb  \cbb  \cbb  \cbb  \cbb  \cbb };
{\row6  \cnb  \cnb  \cbb  \cbb  \cbb  \cbb  \cbb  \cbb };
{\row5  \cnb  \cnb  \cbb  \cbb  \cbb  \cbb  \cbb  \cbb };
{\row4  \cnha \cnha \cba  \cba  \cba  \cba  \cba  \cba };
{\row3  \cna  \cna  \cba  \cba  \cba  \cba  \cba  \cba };
{\row2  \cna  \cna  \cba  \cba  \cbz  \cbz  \cbz  \cbz };
{\row1  \cnz  \cnz  \cbz  \cbz  \cbz  \cbz  \cbz  \cbz };
{\row0  \cnz  \cnz  \cbz  \cbz  \cbz  \cbz  \cbz  \cbz };
\end{xy}
\quad
\begin{xy} (0,0);<3mm,0mm>:<0mm,3mm>::
{\grid{d}};
{\row7  \cnb  \cnb  \cnb  \cnb  \cnb  \cnb  \cbb  \cbb };
{\row6  \cnb  \cnb  \cnb  \cnb  \cnb  \cnb  \cbb  \cbb };
{\row5  \cnb  \cnb  \cnb  \cnb  \cnb  \cnb  \cbb  \cbb };
{\row4  \cnha \cnha \cnha \cnha \cnha \cnha \cba  \cba };
{\row3  \cna  \cna  \cna  \cna  \cna  \cna  \cba  \cba };
{\row2  \cna  \cna  \cna  \cna  \cnz  \cnz  \cbz  \cbz };
{\row1  \cnz  \cnz  \cnz  \cnz  \cnz  \cnz  \cbz  \cbz };
{\row0  \cnz  \cnz  \cnz  \cnz  \cnz  \cnz  \cbz  \cbz };
\end{xy}
\]
\caption{After line \ref{line:ITFT-first-rows} of $\cfITFT$. The bold columns are about to be transformed by lines \ref{line:ITFT-first-cols-1}--\ref{line:ITFT-first-cols-2}.}
\label{fig:ITFT-after-rows}
\end{figure}

\begin{figure}
\[ 
\begin{xy} (0,0);<3mm,0mm>:<0mm,3mm>::
{\grid{a}};
{\row7  \cbha \cbha \cbha \cbb  \cbb  \cbb  \cbz  \cbz};
{\row6  \cnz  \cnz  \cnz  \cns  \cns  \cns  \cnz  \cnz};
{\row5  \cnz  \cnz  \cnz  \cns  \cns  \cns  \cnz  \cnz};
{\row4  \cnz  \cnz  \cnz  \cns  \cns  \cns  \cnz  \cnz};
{\row3  \cnz  \cnz  \cnz  \cns  \cns  \cns  \cnz  \cnz};
{\row2  \cnz  \cnz  \cnz  \cns  \cns  \cns  \cnz  \cnz};
{\row1  \cnz  \cnz  \cnz  \cns  \cns  \cns  \cnz  \cnz};
{\row0  \cnz  \cnz  \cnz  \cns  \cns  \cns  \cnz  \cnz};
\end{xy}
\quad
\begin{xy} (0,0);<3mm,0mm>:<0mm,3mm>::
{\grid{b}};
{\row7  \cna  \cna  \cna  \cna  \cna  \cna  \cna  \cna };
{\row6  \cna  \cna  \cna  \cna  \cna  \cna  \cna  \cna };
{\row5  \cna  \cna  \cna  \cna  \cna  \cna  \cna  \cna };
{\row4  \cbbo \cbbo \cbbo \cbbo \cbbo \cbbo \cbbo \cbbo};
{\row3  \cns  \cns  \cns  \cns  \cns  \cns  \cns  \cns };
{\row2  \cns  \cns  \cns  \cns  \cns  \cns  \cns  \cns };
{\row1  \cns  \cns  \cns  \cns  \cns  \cns  \cns  \cns };
{\row0  \cns  \cns  \cns  \cns  \cns  \cns  \cns  \cns };
\end{xy}
\quad
\begin{xy} (0,0);<3mm,0mm>:<0mm,3mm>::
{\grid{c}};
{\row7  \cnb  \cnb  \cna  \cna  \cna  \cna  \cna  \cna };
{\row6  \cnb  \cnb  \cna  \cna  \cna  \cna  \cna  \cna };
{\row5  \cnb  \cnb  \cna  \cna  \cna  \cna  \cna  \cna };
{\row4  \cbha \cbha \cbb  \cbb  \cbb  \cbb  \cbb  \cbb };
{\row3  \cna  \cna  \cns  \cns  \cns  \cns  \cns  \cns };
{\row2  \cna  \cna  \cns  \cns  \cns  \cns  \cns  \cns };
{\row1  \cnz  \cnz  \cns  \cns  \cns  \cns  \cns  \cns };
{\row0  \cnz  \cnz  \cns  \cns  \cns  \cns  \cns  \cns };
\end{xy}
\quad
\begin{xy} (0,0);<3mm,0mm>:<0mm,3mm>::
{\grid{d}};
{\row7  \cnb  \cnb  \cnb  \cnb  \cnb  \cnb  \cna  \cna };
{\row6  \cnb  \cnb  \cnb  \cnb  \cnb  \cnb  \cna  \cna };
{\row5  \cnb  \cnb  \cnb  \cnb  \cnb  \cnb  \cna  \cna };
{\row4  \cbha \cbha \cbha \cbha \cbha \cbha \cbb  \cbb };
{\row3  \cna  \cna  \cna  \cna  \cna  \cna  \cns  \cns };
{\row2  \cna  \cna  \cna  \cna  \cnz  \cnz  \cns  \cns };
{\row1  \cnz  \cnz  \cnz  \cnz  \cnz  \cnz  \cns  \cns };
{\row0  \cnz  \cnz  \cnz  \cnz  \cnz  \cnz  \cns  \cns };
\end{xy}
\]
\caption{After lines \ref{line:ITFT-first-cols-1}--\ref{line:ITFT-first-cols-2} of $\cfITFT$. The bold row is about to be transformed by line \ref{line:ITFT-last-row}.}
\label{fig:ITFT-after-first-cols}
\end{figure}

\begin{figure}
\[ 
\begin{xy} (0,0);<3mm,0mm>:<0mm,3mm>::
{\grid{a}};
{\row7  \cbb  \cbb  \cbb  \cnhao\cns  \cns  \cns  \cns};
{\row6  \cbz  \cbz  \cbz  \cns  \cns  \cns  \cnz  \cnz};
{\row5  \cbz  \cbz  \cbz  \cns  \cns  \cns  \cnz  \cnz};
{\row4  \cbz  \cbz  \cbz  \cns  \cns  \cns  \cnz  \cnz};
{\row3  \cbz  \cbz  \cbz  \cns  \cns  \cns  \cnz  \cnz};
{\row2  \cbz  \cbz  \cbz  \cns  \cns  \cns  \cnz  \cnz};
{\row1  \cbz  \cbz  \cbz  \cns  \cns  \cns  \cnz  \cnz};
{\row0  \cbz  \cbz  \cbz  \cns  \cns  \cns  \cnz  \cnz};
\end{xy}
\quad
\begin{xy} (0,0);<3mm,0mm>:<0mm,3mm>::
{\grid{b}};
{\row7  \cna  \cna  \cna  \cna  \cna  \cna  \cna  \cna };
{\row6  \cna  \cna  \cna  \cna  \cna  \cna  \cna  \cna };
{\row5  \cna  \cna  \cna  \cna  \cna  \cna  \cna  \cna };
{\row4  \cnhao\cns  \cns  \cns  \cns  \cns  \cns  \cns  };
{\row3  \cns  \cns  \cns  \cns  \cns  \cns  \cns  \cns };
{\row2  \cns  \cns  \cns  \cns  \cns  \cns  \cns  \cns };
{\row1  \cns  \cns  \cns  \cns  \cns  \cns  \cns  \cns };
{\row0  \cns  \cns  \cns  \cns  \cns  \cns  \cns  \cns };
\end{xy}
\quad
\begin{xy} (0,0);<3mm,0mm>:<0mm,3mm>::
{\grid{c}};
{\row7  \cbb  \cbb  \cna  \cna  \cna  \cna  \cna  \cna };
{\row6  \cbb  \cbb  \cna  \cna  \cna  \cna  \cna  \cna };
{\row5  \cbb  \cbb  \cna  \cna  \cna  \cna  \cna  \cna };
{\row4  \cbb  \cbb  \cnhao\cns  \cns  \cns  \cns  \cns };
{\row3  \cba  \cba  \cns  \cns  \cns  \cns  \cns  \cns };
{\row2  \cba  \cba  \cns  \cns  \cns  \cns  \cns  \cns };
{\row1  \cbz  \cbz  \cns  \cns  \cns  \cns  \cns  \cns };
{\row0  \cbz  \cbz  \cns  \cns  \cns  \cns  \cns  \cns };
\end{xy}
\quad
\begin{xy} (0,0);<3mm,0mm>:<0mm,3mm>::
{\grid{d}};
{\row7  \cbb  \cbb  \cbb  \cbb  \cbb  \cbb  \cna  \cna };
{\row6  \cbb  \cbb  \cbb  \cbb  \cbb  \cbb  \cna  \cna };
{\row5  \cbb  \cbb  \cbb  \cbb  \cbb  \cbb  \cna  \cna };
{\row4  \cbb  \cbb  \cbb  \cbb  \cbb  \cbb  \cnhao\cns };
{\row3  \cba  \cba  \cba  \cba  \cba  \cba  \cns  \cns };
{\row2  \cba  \cba  \cba  \cba  \cbz  \cbz  \cns  \cns };
{\row1  \cbz  \cbz  \cbz  \cbz  \cbz  \cbz  \cns  \cns };
{\row0  \cbz  \cbz  \cbz  \cbz  \cbz  \cbz  \cns  \cns };
\end{xy}
\]
\caption{After line \ref{line:ITFT-last-row} of $\cfITFT$. The bold columns are about to be transformed by lines \ref{line:ITFT-last-cols-1}--\ref{line:ITFT-last-cols-2}.}
\label{fig:ITFT-after-last-row}
\end{figure}

\begin{figure}
\[ 
\begin{xy} (0,0);<3mm,0mm>:<0mm,3mm>::
{\grid{a}};
{\row7  \cna  \cna  \cna  \cnhao\cns  \cns  \cns  \cns};
{\row6  \cns  \cns  \cns  \cns  \cns  \cns  \cnz  \cnz};
{\row5  \cns  \cns  \cns  \cns  \cns  \cns  \cnz  \cnz};
{\row4  \cns  \cns  \cns  \cns  \cns  \cns  \cnz  \cnz};
{\row3  \cns  \cns  \cns  \cns  \cns  \cns  \cnz  \cnz};
{\row2  \cns  \cns  \cns  \cns  \cns  \cns  \cnz  \cnz};
{\row1  \cns  \cns  \cns  \cns  \cns  \cns  \cnz  \cnz};
{\row0  \cns  \cns  \cns  \cns  \cns  \cns  \cnz  \cnz};
\end{xy}
\quad
\begin{xy} (0,0);<3mm,0mm>:<0mm,3mm>::
{\grid{b}};
{\row7  \cna  \cna  \cna  \cna  \cna  \cna  \cna  \cna };
{\row6  \cna  \cna  \cna  \cna  \cna  \cna  \cna  \cna };
{\row5  \cna  \cna  \cna  \cna  \cna  \cna  \cna  \cna };
{\row4  \cnhao\cns  \cns  \cns  \cns  \cns  \cns  \cns  };
{\row3  \cns  \cns  \cns  \cns  \cns  \cns  \cns  \cns };
{\row2  \cns  \cns  \cns  \cns  \cns  \cns  \cns  \cns };
{\row1  \cns  \cns  \cns  \cns  \cns  \cns  \cns  \cns };
{\row0  \cns  \cns  \cns  \cns  \cns  \cns  \cns  \cns };
\end{xy}
\quad
\begin{xy} (0,0);<3mm,0mm>:<0mm,3mm>::
{\grid{c}};
{\row7  \cna  \cna  \cna  \cna  \cna  \cna  \cna  \cna };
{\row6  \cna  \cna  \cna  \cna  \cna  \cna  \cna  \cna };
{\row5  \cna  \cna  \cna  \cna  \cna  \cna  \cna  \cna };
{\row4  \cna  \cna  \cnhao\cns  \cns  \cns  \cns  \cns };
{\row3  \cns  \cns  \cns  \cns  \cns  \cns  \cns  \cns };
{\row2  \cns  \cns  \cns  \cns  \cns  \cns  \cns  \cns };
{\row1  \cns  \cns  \cns  \cns  \cns  \cns  \cns  \cns };
{\row0  \cns  \cns  \cns  \cns  \cns  \cns  \cns  \cns };
\end{xy}
\quad
\begin{xy} (0,0);<3mm,0mm>:<0mm,3mm>::
{\grid{d}};
{\row7  \cna  \cna  \cna  \cna  \cna  \cna  \cna  \cna };
{\row6  \cna  \cna  \cna  \cna  \cna  \cna  \cna  \cna };
{\row5  \cna  \cna  \cna  \cna  \cna  \cna  \cna  \cna };
{\row4  \cna  \cna  \cna  \cna  \cna  \cna  \cnhao\cns };
{\row3  \cns  \cns  \cns  \cns  \cns  \cns  \cns  \cns };
{\row2  \cns  \cns  \cns  \cns  \cns  \cns  \cns  \cns };
{\row1  \cns  \cns  \cns  \cns  \cns  \cns  \cns  \cns };
{\row0  \cns  \cns  \cns  \cns  \cns  \cns  \cns  \cns };
\end{xy}
\]
\caption{After lines \ref{line:ITFT-last-cols-1}--\ref{line:ITFT-last-cols-2} of $\cfITFT$.}
\label{fig:ITFT-done}
\end{figure}

\textbf{Case (a):} $z_1 = 0$. This implies that $0 < n_2 \leq z_2 = z_2' < L_2$, $n_1 = 0$, $m = n_2$, $m' = z_2$, and $f' = 1$. Line \ref{line:ITFT-first-rows} has no effect since $n_1 = 0$. Line \ref{line:ITFT-first-cols-1} computes $x_i = L_2 b_i$ for $n_2 \leq i < z_2$, and destroys $x_i$ for $n_2 \leq i_2 < z_2$, $1 \leq i_1 < L_1$. Line \ref{line:ITFT-first-cols-2} has no effect since $z_2 = z_2'$. Line \ref{line:ITFT-last-row} computes $x_i = L_2 b_i$ for $0 \leq i < n_2$, computes $x_{n_2} = x_n = \hat a_n$ if $f = 1$, and destroys $x_i$ for $n_2 + f \leq i < L_2$. Line \ref{line:ITFT-last-cols-1} computes $x_i = La_i$ for $0 \leq i < n_2 = n$, and destroys $x_i$ for $0 \leq i_2 < n_2$, $1 \leq i_1 < L_1$. Line \ref{line:ITFT-last-cols-2} has no effect since $m = n_2$.

\textbf{Case (b):} $z_1 > 0$ and $n_2 = 0$. This implies that $z_1 \geq n_1 > 0$, $z_2' = L_2$, $m = 0$, $m' = z_2$ and $f' = f$. Line \ref{line:ITFT-first-rows} computes $x_i = L_2 b_i$ for $0 \leq i < n_1 L_2 = n$. Lines \ref{line:ITFT-first-cols-1}--\ref{line:ITFT-first-cols-2} compute $x_i = La_i$ for $0 \leq i < n_1 L_2 = n$, and if $f = 1$ also compute $x_i = L_2 b_i$ for $0 \leq i_2 < L_2$, $i_1 = n_1$; they destroy $x_i$ for $L_2(n_1 + f) \leq i < L$. If $f = 1$, then line \ref{line:ITFT-last-row} computes $x_{n_1 L_2} = x_n = \hat a_n$ and destroys $x_i$ for $n_1 L_2 < i < (n_1 + 1)L_2$. Lines \ref{line:ITFT-last-cols-1}--\ref{line:ITFT-last-cols-2} have no effect since $m = n_2 = 0$.

\textbf{Case (c):} $z_1 > 0$, $n_2 > 0$ and $n_2 \leq z_2$. This implies that $z_2' = L_2$, $0 \leq n_1 < L_1$, $m = n_2$, $m' = z_2$, and $f' = 1$. Line \ref{line:ITFT-first-rows} computes $x_i = L_2 b_i$ for $0 \leq i < n_1 L_2$. For each $n_2 \leq i_2 < L_2$, lines \ref{line:ITFT-first-cols-1}--\ref{line:ITFT-first-cols-2} compute $x_i = L a_i$ for $0 \leq i_1 < n_1$, compute $x_i = L_2 b_i$ for $i_1 = n_1$, and destroy $x_i$ for $n_1 < i_1 < L_1$. Line \ref{line:ITFT-last-row} computes $x_i = b_i$ for $0 \leq i_2 < n_2$, $i_1 = n_1$, computes $x_n = \hat a_n$ if $f = 1$, and destroys $x_i$ for $n_2 + f \leq i_2 < L_2$, $i_1 = n_1$. Finally, for each $0 \leq i_2 < n_2$, lines \ref{line:ITFT-last-cols-1}--\ref{line:ITFT-last-cols-2} compute $x_i = L a_i$ for $0 \leq i_1 < n_1 + 1$ and destroy $x_i$ for $n_1 + 1 \leq i_1 < L_1$.

\textbf{Case (d):} $z_1 > 0$, $n_2 > 0$ and $n_2 > z_2$. The discussion for this case is essentially the same as for (c), with $m$ and $m'$ exchanged, and with slightly different diagrams.

Now we verify the complexity bound. The argument is similar to that used for the TFT. For $L = 2$ the bound is $\min((n+f-1)/2 + 1, 1) = 1$, so the estimate holds. Now assume that $L \geq 4$, and let $\ell_1 = \log_2 L_1$ and $\ell_2 = \log_2 L_2$.

We first verify that the number of calls to the base case is bounded by $L\ell/2$. By induction, lines \ref{line:ITFT-first-cols-1}--\ref{line:ITFT-first-cols-2} and \ref{line:ITFT-last-cols-1}--\ref{line:ITFT-last-cols-2} call the base case at most $L_2(L_1 \ell_1/2)$ times altogether. Lines \ref{line:ITFT-first-rows} and \ref{line:ITFT-last-row} call it at most $L_1(L_2 \ell_2/2)$ times (note that if line \ref{line:ITFT-last-row} is executed then $n_1 \leq L_1 - 1$). The sum is $L_1 L_2 (\ell_1 + \ell_2)/2 = L\ell/2$.

Second, we must verify that the number of calls is bounded by $(n+f-1)\ell/2 + L - 1$.  Line \ref{line:ITFT-first-rows} calls the base case at most $n_1(L_2\ell_2/2)$ times, lines \ref{line:ITFT-first-cols-1}--\ref{line:ITFT-first-cols-2} call it at most $(L_2 - n_2)((n_1 + f' - 1)\ell_1/2 + L_1 - 1)$ times, line \ref{line:ITFT-last-row} calls it at most $f'((n_2 + f - 1)\ell_2/2 + L_2 - 1)$ times, and lines \ref{line:ITFT-last-cols-1}--\ref{line:ITFT-last-cols-2} call it at most $n_2(n_1\ell_1/2 + L_1 - 1)$ times. The sum of these terms is $\frac12 X + Y$, where
\[ \begin{aligned}
 X & = n_1 L_2 \ell_2 + L_2(n_1 + f' - 1)\ell_1 - (f'-1)n_2 \ell_1 + f'(n_2 + f - 1)\ell_2 \\
   & = (n - n_2)\ell + (f' - 1)(L_2 - n_2)\ell_1 + f'(n_2 + f - 1)\ell_2 \\
   & = (n + f - 1)\ell + (f' - 1)(L_2 - n_2)\ell_1 + (n_2 + f - 1)(\ell_2 f' - \ell), \\
 Y & = L_2(L_1 - 1) + f'(L_2 - 1) = L - 1 + (f' - 1)(L_2 - 1).
\end{aligned} \]
If $f' = 1$ then $n_2 + f \geq 1$ and the bound follows since $\ell_2 f' - \ell = -\ell_1 \leq 0$. If $f' = 0$ then $n_2 = f = 0$ and the bound follows since $-L_2 \ell_1 + \ell \leq 0$ (as in the proof of Theorem \ref{thm:TFT}).
\end{proof}

\section{Empirical performance and applications}
\label{sec:applications}

\subsection{The Sch\"onhage--Strassen algorithm}

Both the Magma computer algebra system (version 2.14-15, \cite{magma}) and Victor Shoup's NTL library (version 5.4.2, \cite{ntl}) use the Sch\"onhage--Strassen algorithm \cite{schonhage-strassen} for multiplication of dense polynomials in $\ZZ[x]$ when (roughly speaking) the coefficient size of the input polynomials (in bits) is larger than their degree. The algorithm may be sketched as follows. Suppose that $f, g \in \ZZ[x]$, and put $h = fg$. Let $R = \ZZ/(2^{kN/2} + 1)\ZZ$, where we choose $N = 2^n > \deg h$ and $kN/2$ larger than the size of the coefficients of $h$. Multiply the polynomials in $R[x]/(x^N - 1)$, using an FFT with respect to the principal $N$-th root of unity $\omega_{N} = 2^k \in R$, and lift the result back to $\ZZ[x]$. Arithmetic in $R$ is especially efficient owing to the ease of reduction modulo $2^{kN/2} + 1$ and of multiplication by powers of $\omega_N$.

The author, in joint work with William Hart, implemented the Sch\"onhage--Strassen algorithm using the techniques of this paper to improve smoothness and locality. The implementation is part of the \texttt{fmpz\_poly} module in the FLINT library (version 1.0.13, \cite{flint}), which is used as the default back-end for arithmetic in $\ZZ[x]$ in the Sage computer algebra system (version 3.1.1, \cite{sage}).

The following performance measurements were conducted on a 16-core 2.6GHz Opteron server running Ubuntu Linux. This is a 64-bit processor with a 64 KB L1 cache and 1 MB L2 cache. Only a single core was used for the tests. Our own code and NTL were compiled with gcc 4.1.3, and linked with GMP (GNU Multiple Precision Arithmetic Library, \cite{gmp}) version 4.2.3. We also applied an assembly patch of Pierrick Gaudry that improves the performance of GMP on the Opteron. Magma also uses Gaudry's patch, and links statically against GMP.

\begin{figure}
\begin{center}
\includegraphics[width=0.9\textwidth]{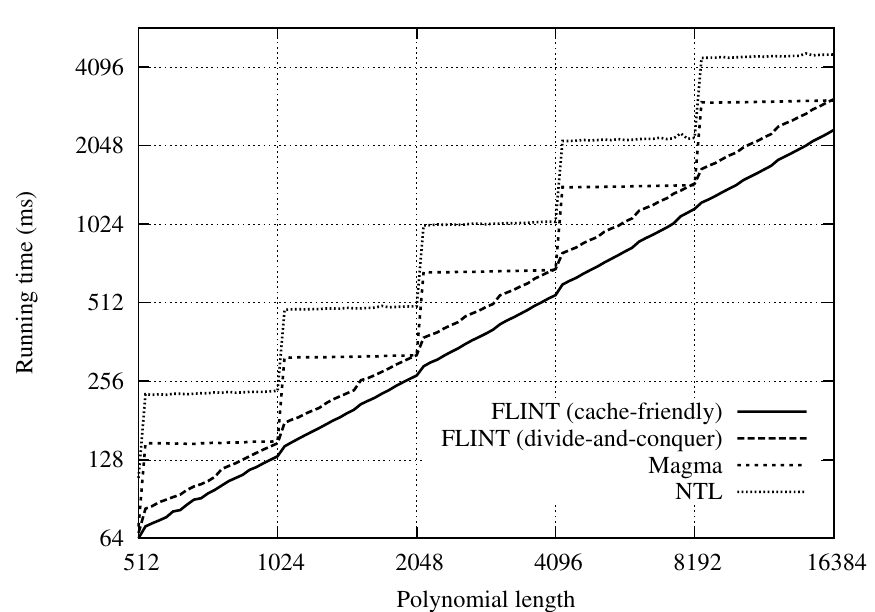}
\caption{Performance of several implementations of the Sch\"onhage--Strassen algorithm for 8000-bit coefficients}
\label{fig:SS-graph}
\end{center}
\end{figure}

Figure \ref{fig:SS-graph} compares four implementations for the case of polynomials with random non-negative $8000$-bit coefficients, with lengths ranging from $512$ to $16384$ in 5\% increments. The graphs for Magma and NTL exhibit the jumps characteristic of FFT-based multiplication algorithms. The two graphs for FLINT show the multiplication performance obtained for van der Hoeven's divide-and-conquer truncated transforms, and for the cache-friendly truncated transforms. The latter is between 15\% and 35\% faster than the former for this range of polynomial lengths, and the relative improvement in performance increases with the degree. Note that the Fourier coefficients are about 16000 bits long ($\approx$ 2 KB), so about 32 coefficients fit into the L1 cache and about 512 coefficients fit into the L2 cache.

\subsection{The Sch\"onhage--Nussbaumer algorithm}

The author implemented the cache-friendy transforms in the context of the Sch\"onhage--Nussbaumer algorithm \cite{schonhage, nussbaumer} for multiplication in $S[x]$ where $S = \ZZ/m\ZZ$ and where $m$ is an odd word-sized modulus. The implementation is part of the \texttt{zn\_poly} polynomial arithmetic library (version 0.9, \cite{zn-poly}). The code has been used in several number-theoretic applications, including computations of zeta functions of hyperelliptic curves over prime fields of large characteristic \cite{kedlaya-large-p}, computations of $L$-functions of hyperelliptic curves over $\QQ$ \cite{L-functions}, computing Hilbert class polynomials \cite{hilbert}, and an ongoing project with Joe Buhler to extend the verification of Vandiver's conjecture and computation of irregular primes and cyclotomic invariants carried out in \cite{buhler}.

The basic idea of the Sch\"onhage--Nussbaumer algorithm is to split the input polynomials into pieces of length $M/4$, and then map the problem to a convolution in $R[z]/(z^K - 1)$ for $R = S[y]/(y^{M/2} + 1)$, where $K \divides M$ so that $R$ contains a principal $K$-th root of unity (namely $y^{M/K}$), and where $K$ is large enough to accommodate the product. Our implementation performs the FFTs over $R$ using the transforms of \S\ref{sec:TFT} and \S\ref{sec:ITFT}, ensuring relatively smooth performance as a function of the input polynomial length. The pointwise multiplications are handled using a multipoint Kronecker substitution method \cite{multipoint}, switching to Nussbaumer's algorithm for sufficiently large $M$. (Note that we do \emph{not} perform an FFT over $\ZZ/m\ZZ$; such an FFT is usually not possible since $\ZZ/m\ZZ$ rarely contains appropriate roots of unity.)

We compared the performance of the cache-friendly transforms to the divide-and-conquer transforms for a range of polynomial lengths ($10^4$ to $3\times 10^7$) and modulus sizes (5 to 63 bits). We observed a modest improvement in speed of up to 15\%, depending on the polynomial length and modulus. As expected, polynomials of higher degree enjoy a greater relative improvement, as locality plays a greater role in such multiplications. Somewhat counterintuitively, the modulus size had the opposite effect on relative performance. This may be explained by noting that the FFTs in our implementation operate on arrays with each element of $\ZZ/m\ZZ$ occupying a single machine word, so the total FFT time does not depend on the modulus; on the other hand, the pointwise multiplications are faster for smaller moduli, as the Kronecker substitution reduces them to smaller integer multiplications. The implementation thus spends a smaller proportion of the total time in the FFTs when the modulus is larger, leading to a smaller relative improvement derived from the cache-friendly transforms.

\section{The small coefficient case}
\label{sec:small}

In the applications described in \S\ref{sec:applications}, elements of the coefficient ring $R$ occupy moderately large blocks of memory. However, FFTs are also commonly applied over `small' coefficients, such as double-precision floating point numbers, or residues modulo a word-sized prime $p$ where $\ZZ/p\ZZ$ contains suitable roots of unity. We have not attempted an implementation in this context, but in this section we make several relevant observations.

An essential consideration in the small coefficient case is spatial locality, which we have largely ignored in this paper. In typical contemporary cache hardware, the cache is organised into cache lines, each capable of storing several words from consecutive locations in main memory. If an algorithm operates on coefficients spaced out in memory, then only a single word of each cache line will be utilised, greatly reducing the effective size of the cache. Moreover, the mapping from physical addresses to cache lines often depends on only the last few bits of the address. If two coefficients are separated by a large power-of-two distance in memory --- exactly the situation during the column transforms of a matrix FFT --- then the cache cannot simultaneously hold both of them (although this can be mitigated to some extent by cache associativity). The standard solution to these problems is to transpose the matrix for the duration of the column transforms, using a cache-friendly matrix transpose algorithm, so that the subtransforms always operate on consecutive data. A similar approach would be needed to adapt our TFTs/ITFTs to the small coefficient case.

A second remark is that in the small coefficient case, it is quite reasonable to zero-pad the inputs so that there is no `partial row'. The rationale is that the lowest level of cache can hold a large number of coefficients, making the penalty for zero-padding quite small. For example, suppose that the cache can hold $2^{13}$ coefficients (typical for a 64KB L1 cache with double-precision floating-point coefficients), and that we are multiplying polynomials whose product has length $n = 12801 = 100 \cdot 2^7 + 1$. This requires a transform length of $2^{14}$, which we may decompose into a $2^7 \times 2^7$ matrix. If we zero-pad the inputs so that $n$ increases to $12928 = 101 \cdot 2^7$, an integral number of rows, the running time penalty incurred is at most 1\%. This approach simplifies the ITFT routine considerably, since it may be implemented by simply reversing the steps of the TFT, removing the need for the special row transform (line \ref{line:ITFT-last-row} of Algorithm \ref{algo:ITFT}). The reduction in code complexity is likely worthwhile. We also note that the presence of a partial row makes it more difficult to maintain spatial locality during the special row transform.

Finally, in the implementations described in \S\ref{sec:applications}, the parameter $\zeta = \omega^s$ is represented simply by the integer $s$. With this representation, computing roots of unity (for example, computing $\zeta^{L_1}$ in line \ref{line:TFT-begin-rows} of Algorithm \ref{algo:TFT}) is very cheap compared to the cost of arithmetic in $R$. In the small coefficient case this is no longer necessarily true, and the cost of computing or storing roots of unity must be taken into account.

\section*{Acknowledgments}

Many thanks to William Hart for his collaboration in implementing these algorithms in FLINT, to William Hart, Andrew Sutherland and Joris van der Hoeven for their comments on a draft of this paper, and to the Department of Mathematics at Harvard University for supplying the hardware on which the performance measurements were carried out.

\bibliographystyle{amsalpha}
\bibliography{cache-trunc-fft}

\newcommand{\etalchar}[1]{$^{#1}$}
\providecommand{\bysame}{\leavevmode\hbox to3em{\hrulefill}\thinspace}
\providecommand{\MR}{\relax\ifhmode\unskip\space\fi MR }
\providecommand{\MRhref}[2]{%
  \href{http://www.ams.org/mathscinet-getitem?mr=#1}{#2}
}
\providecommand{\href}[2]{#2}
\begin{thebibliography}{BCE{\etalchar{+}}01}

\bibitem[Bai90]{bailey}
David~H. Bailey, \emph{{FFT}s in external or hierarchical memory}, Journal of
  Supercomputing \textbf{4} (1990), 23--35.

\bibitem[BCE{\etalchar{+}}01]{buhler}
Joe Buhler, Richard Crandall, Reijo Ernvall, Tauno Mets{\"a}nkyl{\"a}, and
  M.~Amin Shokrollahi, \emph{Irregular primes and cyclotomic invariants to 12
  million}, J. Symbolic Comput. \textbf{31} (2001), no.~1-2, 89--96,
  Computational algebra and number theory (Milwaukee, WI, 1996).

\bibitem[BCP97]{magma}
Wieb Bosma, John Cannon, and Catherine Playoust, \emph{The {M}agma algebra
  system. {I}. {T}he user language}, J. Symbolic Comput. \textbf{24} (1997),
  no.~3-4, 235--265.

\bibitem[Gra08]{gmp}
Torbj\"orn Granlund, \emph{The {GNU} {M}ultiple {P}recision {A}rithmetic
  library}, 2008, http://gmplib.org/.

\bibitem[Har07]{kedlaya-large-p}
David Harvey, \emph{{Kedlaya's algorithm in larger characteristic}}, Int Math
  Res Notices \textbf{2007} (2007), no.~rnm095, rnm095--29.

\bibitem[Har08a]{multipoint}
David Harvey, \emph{Faster polynomial multiplication via multipoint {K}ronecker
  substitution}, arXiv preprint {\sf cs.SC/0712.4046v1}, 2008.

\bibitem[Har08b]{zn-poly}
\bysame, \emph{The zn\_poly library}, 2008,
  http://www.cims.nyu.edu/\~{}harvey/zn\_poly/.

\bibitem[HH08]{flint}
William Hart and David Harvey, \emph{The {FLINT} library}, 2008,
  http://www.flintlib.org/.

\bibitem[KS08]{L-functions}
Kiran~S. Kedlaya and Andrew Sutherland, \emph{Computing {$L$}-series of
  hyperelliptic curves}, Lecture Notes in Computer Science, vol. 5011, ANTS
  VIII, Springer, 2008, pp.~312--326.

\bibitem[Nus80]{nussbaumer}
Henri~J. Nussbaumer, \emph{Fast polynomial transform algorithms for digital
  convolution}, IEEE Trans. Acoust. Speech Signal Process. \textbf{28} (1980),
  no.~2, 205--215.

\bibitem[Sch77]{schonhage}
A.~Sch{\"o}nhage, \emph{Schnelle {M}ultiplikation von {P}olynomen \"uber
  {K}\"orpern der {C}harakteristik 2}, Acta Informat. \textbf{7} (1976/77),
  no.~4, 395--398.

\bibitem[Sho07]{ntl}
Victor Shoup, \emph{{NTL}: A library for doing number theory},
  http://www.shoup.net/ntl/, 2007.

\bibitem[SJ05]{sage}
William Stein and David Joyner, \emph{Sage: System for algebra and geometry
  experimentation}, Communications in Computer Algebra (ACM SIGSAM Bulletin)
  \textbf{39} (2005), no.~2, 61--64, http://sagemath.org/.

\bibitem[SS71]{schonhage-strassen}
A.~Sch{\"o}nhage and V.~Strassen, \emph{Schnelle {M}ultiplikation grosser
  {Z}ahlen}, Computing (Arch. Elektron. Rechnen) \textbf{7} (1971), 281--292.

\bibitem[Sut08]{hilbert}
Andrew Sutherland, \emph{Computing {H}ilbert class polynomials with the {CRT}
  method}, in preparation, 2008.

\bibitem[vdH04]{vdh1}
Joris van~der Hoeven, \emph{The truncated {F}ourier transform and
  applications}, I{SSAC} 2004, ACM, New York, 2004, pp.~290--296.

\bibitem[vdH05]{vdh2}
\bysame, \emph{Notes on the truncated {F}ourier transform}, unpublished,
  retrieved September 2008 from http://www.math.u-psud.fr/\~{}vdhoeven/, 2005.

\end{thebibliography}

\end{document}